\crefname{theorem}{thrm.}{thrms.}
\crefname{enumi}{case}{cases}
\DeclareMathOperator{\sx}{sx}
\newlength{\leftstackrelawd}
\newlength{\leftstackrelbwd}
\def\leftstackrel#1#2{\settowidth{\leftstackrelawd}%
{${{}^{#1}}$}\settowidth{\leftstackrelbwd}{$#2$}%
\addtolength{\leftstackrelawd}{-\leftstackrelbwd}%
\leavevmode\ifthenelse{\lengthtest{\leftstackrelawd>0pt}}%
{\kern-.5\leftstackrelawd}{}\mathrel{\mathop{#2}\limits^{#1}}}
\newtheorem{theorem}{Theorem}[section]
\newtheorem{proposition}{Proposition}[section]
\newtheorem{corollary}{Corollary}[section]
\newtheorem{definition}{Definition}[section]
\newtheorem{remark}{Remark}[section]
\title{A partial information decomposition for discrete and continuous variables}
\author[1, 2]{\small Kyle Schick-Poland\footnote{Corresponding author \href{mailto:kyle.schick-poland@uni-goettingen.de}{kyle.schick-poland@uni-goettingen.de}}}
\author[1]{\small Abdullah Makkeh\footnote{Corresponding author \href{mailto:abdullah.alimakkeh@uni-goettingen.de}{abdullah.alimakkeh@uni-goettingen.de}}}
\author[1, 3]{\small Aaron J. Gutknecht}
\author[2]{\small Patricia Wollstadt}
\author[4]{\small Anja Sturm}
\author[1]{\small Michael Wibral}
\affil[1]{\footnotesize Campus  Institute  for  Dynamics  of  Biological  Networks,  Georg-August-University,  Goettingen,  Germany}
\affil[2]{\footnotesize Honda Research Institute Europe, Offenbach am Main, Germany}
\affil[3]{\footnotesize MEG Unit, Brain ImagingCenter, Goethe University, Frankfurt, Germany}
\affil[4]{\footnotesize Institute for Mathematical Stochastics, Georg-August-University,  Goettingen,  Germany}
\date{\today}
\begin{document}

\maketitle
\begin{abstract}
 
Conceptually, partial information decomposition (PID) is concerned with separating the information contributions several sources hold about a certain target by decomposing the corresponding joint mutual information into contributions such as synergistic, redundant, or unique information. Despite PID conceptually being defined for any type of random variables, so far, PID could only be quantified for the joint mutual information of discrete systems. Recently, a quantification for PID in continuous settings for two or three source variables was introduced. Nonetheless, no ansatz has managed to both quantify PID for more than three variables and cover general measure-theoretic random variables, such as mixed discrete-continuous, or continuous random variables yet. In this work we will propose an information quantity, defining the terms of a PID, which is well-defined for any number or type of source or target random variable. This proposed quantity is tightly related to a recently developed local shared information quantity for discrete random variables based on the idea of shared exclusions. Further, we prove that this newly proposed information-measure fulfills various desirable properties, such as satisfying a set of local PID axioms, invariance under invertible transformations, differentiability with respect to the underlying probability density, and admitting a target chain rule.

\end{abstract}

\section{Introduction} \label{section_intro}
Partial information decomposition (PID) aims at describing the structure of the precise mechanisms, in which multiple random variables $S_1, \ldots,S_n$, called \emph{sources}, provide information about a specific random variable $T$, called \emph{target}\cite{williams2010nonnegative}. This is achieved by decomposing the joint mutual information $I(T:S_1, \ldots,S_n)$ into information contributions that fulfill certain requirements of part-whole relationships (mereology) as shown in \cite{gutknecht2020bits}. For instance, a PID of $I(T: S_1, S_2)$ formalizes the intuitive notion that two sources random variables may provide information about a target uniquely -- such that this information is not available from the other source variable, redundantly -- such that this type of this information about the target variable can be obtained from either source variable, or synergistically -- such that the information about the target can only be accessed when considering both source variables together. Generally, PID details the way that information about the target variable is provided by multiple source variables and collections thereof.

The PID problem was first described in the seminal work of Williams and Beer~\cite{williams2010nonnegative} who laid out that classic information theory must be extended by additional assumptions or axioms to make the problem well posed. 
In 2013, Bertschinger and colleagues have demonstrated that certain intuitive desiderata of a PID measure are incompatible~\cite{bertschinger2013shared}. Thus a whole landscape of different choices of additional assumptions has been suggested---drawing on motivations from fields outside of information theory and often tailored to specific perspectives. 
Listing some of the most influential approaches, in 2012 Harder et al. quantify a PID by projecting probability distributions onto conditional distributions in an optimization approach~\cite{harder_polani_geometric_measure}.
On the other hand, Bertschwinger et al. derive a unique information quantity invoking the intuition that unique information of a source about a target is determined by the marginal distribution of only those two random variables, leading to a minimization problem over distributions with fixed marginals~\cite{bertschinger_unique}.
In 2017, Ince defined a PID quantity motivated from agents performing non-cooperative games~\cite{ince_game_theoretical_approach}. Taking a more classical information-theoretic perspective, Rosas et al. have quantified an approach for a quantity intuitively resembling a synergistic information based on an optimization problem with respect to information channels, leading to a PID-like decomposition~\cite{rosas_synergy_channel_measure}. While many approaches, including the ones mentioned before, attempt to quantify a PID via averaged information measures, the work from Finn and Lizier in 2013 argues for the relevance of \emph{locally} defined PIDs~\cite{finn_lizier_local_measures}, simultaneously introducing a set of local axioms parallel to the axiomatic approach from Williams and Beer in~\cite{williams2010nonnegative}. Extending these arguments, in 2018 Finn and Lizier investigated the lattice structure of PID, imposing a pointwise separation in so called specification and ambiguity lattices. 

Apart from not being defined locally, none of these previously suggested measures has been shown to be differentiable with respect to the underlying probability distributions---another desirable property from the point of view of information theory. Recently, Makkeh et al.\; recently have proposed a local differentiable information-measure, yielding a partial information decomposition that is solely based on information-theoretic principles and defined for an arbitrary number of source variables~\cite{makkeh2021isx}.
Due to the largely different sets of assumptions it is needed to say that all the aforementioned approaches result in differently valued PID terms. 

However, these previous attempts have been proposed exclusively for discrete random variables. While there have been acknowledgeable advances towards quantifying a continuous PID \cite{barrett_continuous_gaussians,ari_continuous_variables}, these ansatzes could not capture the complexity of a PID beyond two source random variables.

In this work we show that the specific framework from \cite{makkeh2021isx} of partial information decomposition can be extended to systems of continuous variables, and mixed systems of discrete and continuous variables. In particular, we demonstrate that our local information-measure is well-defined measure-theoretically, invariant under invertible transformations, fulfills a target chain rule for composite target random variables, remains differentiable and recovers the discrete definition under consideration of the Dirac-measure. Further we define a corresponding global (averaged) information-measure, that will inherit the same properties as the local one.

This document will be structured as follows. In \cref{section_pid} we shortly explain the conceptual structure of a PID, and reintroduce the information-measure from \cite{makkeh2021isx} that will be generalized to a measure-theoretic setting throughout this work.

Next, some mathematical background has to be recalled in \cref{section_mathematical_background}, stating a number of famous theorems from measure theory and probability theory.

In \cref{section_continuation_PID}, we will be applying the rigor of \cref{section_mathematical_background} to the aforementioned recently introduced PID framework described in \cref{section_pid}, generalizing the expressions to a measure-theoretic setting. In doing so, we have automatically covered the cases of continuous, and mixed random variables. Throughout \cref{subsubsection_recovering_discrete_case,subsubsection_self_redundancy_monotonicity_symmetry,subsection_properties_subsubsection_invariance,subsubsection_differentiability_of_isx,subsubsection_properties_target_chain_rule}, the resulting newly defined quantity will then be demonstrated to recover the discrete PID quantity for a suitable underlying measure, fulfill a version of the pointwise PID axioms as in \cite{finn2018probability}, be invariant under invertible transformations, as well as differentiable, and admit to a target chain rule, respectively.

\section{Partial information decomposition} \label{section_pid}
\subsection{What is partial information decomposition?}
We here briefly present the essentials of PID theory (for more details see~\cite{gutknecht2020bits,williams2010nonnegative}). The starting point of partial information decomposition is given by jointly distributed random variables $S_1,\ldots,S_n, T$ where the $S_i$ are called the "information sources" and T is called the "target". As long as the sources are not statistically independent of the target, they will provide non-zero joint mutual information $I(T:S_1,\ldots,S_n)$ about the target quantifying the degree of dependence \cite{cover1999elements}. The goal of PID is now to answer the question of "who among the sources knows what about the target?". In the case of two sources $S_1$ and $S_2$ this should lead to a decomposition of the joint mutual information $I(T:S_1,S_2)$ into four components (see Figure \cref{fig:2_sources_pid}): the information \textit{uniquely} carried by the first source, the information \textit{uniquely} carried by the second source, the information \textit{redundantly} carried by both sources, and the information that only arises \textit{synergistically} by combining information from both sources. At the same time, the mutual information terms associated with the individual sources, $I(T:S_1)$ and $I(T:S_2)$, should decompose into a unique and a redundant part each. 

\begin{figure}[ht] 
	\centering
	\includegraphics[width=0.5\textwidth]{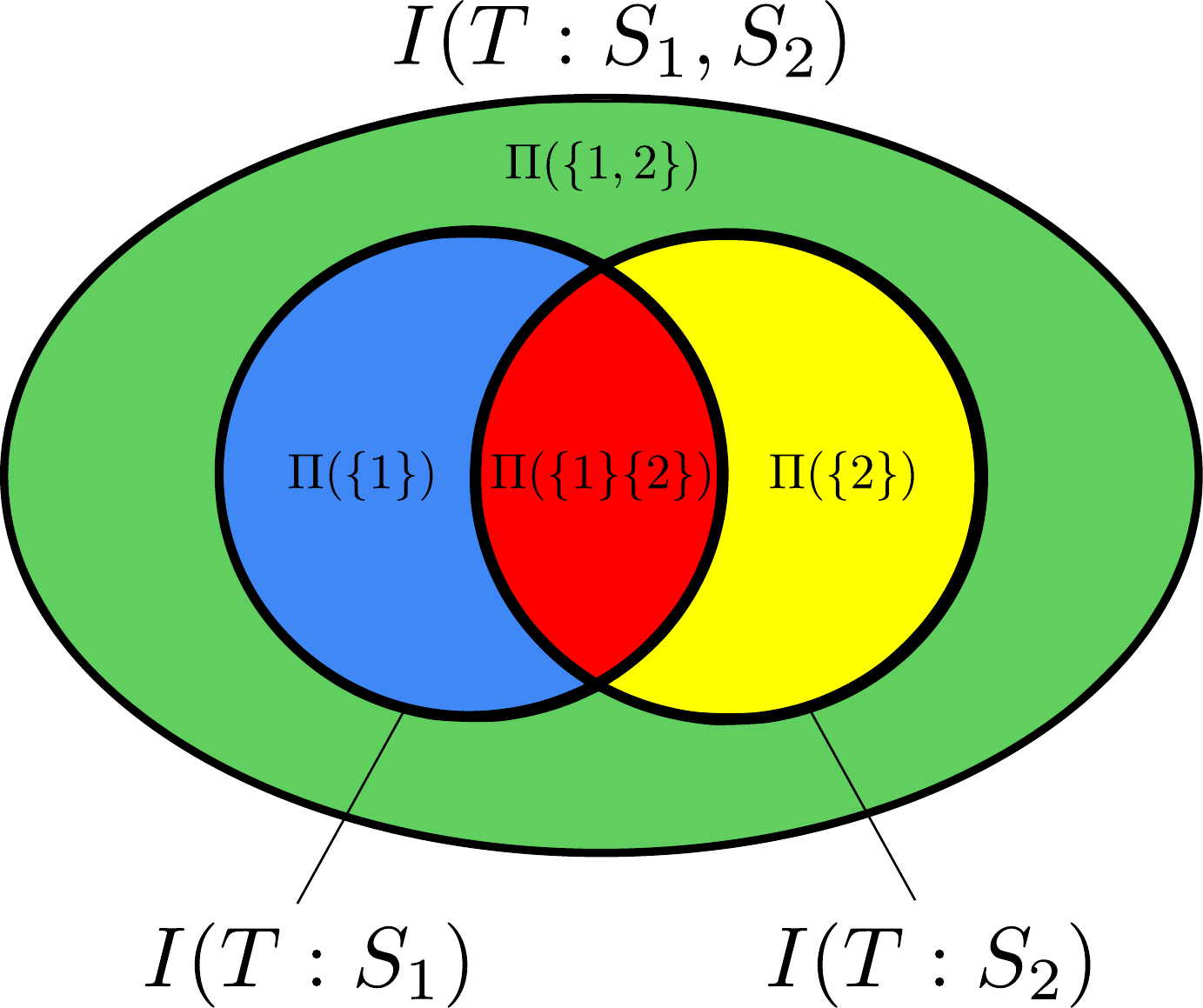}
	\caption{Illustration of partial information decomposition in the case of two sources. The joint mutual information (entire oval shape) decomposes into four pieces: unique information of source 1 (blue patch, $\Pi(\{1\})$), unique information of source 2 (yellow patch, $\Pi(\{2\})$), redundant information (red patch, $\Pi(\{1\}\{2\})$), and synergistic information (green patch, $\Pi(\{1,2\})$). The individual mutual information terms (black circles) each decompose into a unique and a redundant part.}
	\label{fig:2_sources_pid}
\end{figure}

This idea can be generalized to the case of $n$ sources by realizing that the four components are characterized by their distinctive parthood relationships to the information provided by the different possible \textit{subsets} of sources: the information uniquely carried by the first source is only part of $I(T:S_1)$ but not part of $I(T:S_2)$ and the other way around for the information uniquely carried by the second source. The information redundantly carried by both sources is part of both $I(T:S_1)$ and $I(T:S_2)$. And the synergistic information is only part of the joint mutual information $I(T:S_1, S_2)$ but not part of either $I(T:S_1)$ or $I(T:S_2)$. In other words, PID can be thought of as a decomposition of the joint mutual information into components such that there is one component per possible arrangement of parthood relationships to the mutual information provided by the different subsets of sources. The idea of a "possible arrangement of parthood relationships" can be formalized with the notion of a \textit{parthood distribution}:
\begin{definition}
	A parthood distribution (in the context of n source variables) is any function $f:\mathcal{P}\left(\{1,\ldots,n\}\right) \rightarrow \{0,1\}$ such that
	\begin{enumerate}
		\item $f(\emptyset) = 0$ ("There is no information in the empty set")
		\item $f(\{1,\ldots,n\}) = 1$ ("All information is in the full set")
		\item For any two collections of source indices $\mathbf{a}$, $\mathbf{b}$: If $\mathbf{b} \supseteq \mathbf{a}$, then $ f(\mathbf{a}) = 1 \Rightarrow f(\mathbf{b}) = 1$ (Monotonicity)
	\end{enumerate}
\end{definition} 
Intuitively, any "piece" of the joint mutual information is associated with a parthood distribution that describes for each subcollection of sources $\mathbf{a}\subseteq \{1,\ldots,n\}$ whether or not the piece is part of the mutual information $I(T:(S_i)_{i\in\mathbf{a}})$ provided by the subcollection $\mathbf{a}$. In the former case $f(\mathbf{a})=1$ and in the latter  $f(\mathbf{a})=0$. However, not all arrangements of these parthood relationships are possible which is why parthood distributions have to satisfy the three constraints in the above definition. First, no piece of information should be part of the information provided by the empty set of sources, because if we do not know any source we do not obtain any information that the sources may carry about the target. Secondly, since we are talking about pieces of the joint mutual information any such piece is trivially part of the joint mutual information. Finally, if a piece of information is part of the information provided by a subcollection $\mathbf{a}$, then it also has to be part of the information provided by any superset of $\mathbf{a}$. 

Thus, the idea underlying PID is to decompose the joint mutual information into one component per parthood distribution such that the component $\Pi(f)$ associated with parthood distribution $f_\Pi$ is exactly that part of $I(T:S_1,\ldots,S_n)$ that stands in the parthood relations to the different $I(T:(S_i)_{i\in \mathbf{a}})$ specified by $f_\Pi$. As an example consider the case of two sources and the parthood distribution $f$ such that 
\begin{equation}
f(\emptyset)= 0,\hspace{0.1cm} f(\{1\})=1, \hspace{0.1cm}  f(\{2\})=0, \hspace{0.1cm}  f(\{1,2\})=1 \; .
\end{equation}
Then the component $\Pi(f)$ is exactly (i.e. all and only) the information that is part of the mutual information provided by source 1 but not part of the information provided by source 2 (and of course trivially it is part of the information carried by the full set of sources). In other words, $\Pi(f)$ is the information carried uniquely by source 1. 

Until now we have discussed only the qualitative relationships between the components $\Pi$ and the different mutual information terms, i.e. we have discussed which mutual information terms should be made up of which components. Now, on the quantitative side PID generally assumes an \textit{additive} relationship between these quantities: PID is a decomposition of the mutual information into components that simply \textit{add up} to the mutual information. Based on these ideas we may now put forth the following minimal definition of a partial information decomposition:
\begin{definition}[Minimally consistent PID] \label{def:mcpid_sum}	Let $S_1,\ldots,S_n, T$ be jointly distributed random variables with joint distribution $\mathbb{P}_J$ and let $\mathcal{F}_n$ be the set of parthood distributions in the context of n source variables. A minimally consistent partial information decomposition of the joint mutual information provided by the sources $S_1,\ldots,S_n$ about the target $T$ is any function $\Pi_{\mathbb{P}_j}:\mathcal{F}_n \rightarrow \mathbb{R}$, determined by $\mathbb{P}_J$, that satisfies
	\begin{equation}\label{eq:consistency_equation}
	I_{\mathbb{P}_J}(T:(S_i)_{i \in \mathbf{a}}) = \sum\limits_{\substack{f \in \mathcal{F}_n \\ f(\mathbf{a}) = 1}} \Pi_{\mathbb{P}_J}(f)
	\end{equation}
	for all $\mathbf{a} \subseteq \{1,\ldots,n\}$. The subscripts $\mathbb{P}_J$ indicate that both the mutual information and the information atoms are functions of the underlying joint distribution.
\end{definition}

\subsection{PID and redundant information}\label{sec:pid-redundant-info}
In principle, a PID could be specified by defining all $\Pi(f)$ directly making sure condition \cref{eq:consistency_equation} is satisfied. However, a more convenient option is to shift the problem to specifying a measure of the redundant information $I_\cap(T:\mathbf{a}_1; \ldots; \mathbf{a}_m)$ provided by collections $\mathbf{a}_1,\ldots, \mathbf{a}_m$ about the target. Note that the redundant information \textit{component} in the case of two sources (red patch in Figure \cref{fig:2_sources_pid} above) is only a special case of redundant information. In general we should expect redundant information to consist of multiple components. For instance, in the context of three sources, the redundant information of source 1 and 2 will consist of a component representing the information shared by all three sources and a component representing the information shared by sources 1 and 2 but not by source 3.

Just like the components $\Pi$, a measure of redundant information is not defined within classical information theory. However, simply based on the meaning of the word "redundant" it seems reasonable to demand that redundant information should stand in a very specific relationship to the components $\Pi(f)$ introduced in the previous section. Specifically, the redundant information $I_\cap(T:\mathbf{a}_1; \ldots; \mathbf{a}_m)$ should consist of all components $\Pi(f)$ that are part of $I(T:(S_i)_{i \in \mathbf{a}_j})$ for \textit{all} $j=1,\ldots,m$ leading to the equation
\begin{equation}\label{eq:rel_red_atom}
I_\cap(T:\mathbf{a}_1;\ldots; \mathbf{a}_m) = \sum_{\substack{f \in \mathcal{F}_n\\\forall j f(\mathbf{a}_j)=1 }} \Pi(f) \; .
\end{equation}
This equation is to be understood as a constraint on the functions $I_\cap$ and $\Pi$ which are both yet to be defined. It can be shown that equation \cref{eq:rel_red_atom} can be inverted to obtain a unique solution for the $\Pi(f)$ once a measure of redundant information $I_\cap$ is specified \cite{gutknecht2020bits}. However, $I_\cap$ cannot be chosen abritrarily since the equation implies the following properties: 
\begin{enumerate}
	\item $I_\cap(T:\mathbf{a}_1; \ldots;\mathbf{a}_m) = I_\cap(T:\mathbf{a}_{\kappa(1)}; \ldots; \mathbf{a}_{\kappa(m)})$ for any permutation $\kappa$ (\textbf{symmetry})
	\item If $\mathbf{a}_i \supseteq \mathbf{a}_j$ for $i\neq j$, then ${I_\cap(T:\mathbf{a}_1; \ldots; \mathbf{a}_m)}$ $=$ ${I_\cap(T:\mathbf{a}_1; \ldots; \mathbf{a}_{i-1}; \mathbf{a}_{i+1}; \ldots; \mathbf{a}_m)}$ \textbf{ (invariance under superset removal / addition)}
	\item $I_\cap(T:\mathbf{a}) = I(T:\mathbf{a})$ (\textbf{self-redundancy})
\end{enumerate}
The first two properties imply that the definition of a redundancy measure can be restricted to antichains of $(\mathcal{P}(\{1,\ldots,n\}),\subseteq)$, i.e. to a set of subsets $\{\mathbf{a}_1,\ldots,\mathbf{a}_m\}$ such that no $\mathbf{a}_i$ is a subset of some $\mathbf{a}_j$ for $i\neq j$. The third condition ensures that the consistency equation $\cref{eq:consistency_equation}$ is satisfied if the components $\Pi$ are determined by inversion of \cref{eq:rel_red_atom}.

A measure of redundant information satisfying all three properties, is the $I_\cap^{\sx}$ measure that was introduced in \cite{makkeh2021isx} for the case of discrete random variables. In the following, we will briefly explain this measure. The key aim of this paper is to provide a continuous extension of $I_\cap^{\sx}$, and hence, a continuous PID.

\subsection{The $i_\cap^{\sx}$ measure of redundant information} \label{section_PID_isx_measure}
The key idea behind the $I_\cap^{\sx}$ measure is to first define a \textit{pointwise} measure of the redundant information that a particular realization $s_1,\ldots,s_n$ of the source variables carries about a particular realization t of the target variable. This measure can then be averaged to obtain the global measure $I_\cap^{\sx}$. The local measure, denoted $i_\cap^{\sx}$, can be motivated in two distinct ways: firstly, based on the idea that redundant information is related to events being excluded (rendered impossible) redundantly by each source realization, and secondly, in terms of a particular logical statement about the source realizations that captures precisely the information shared by all of them
(for full details see \cite{makkeh2021isx}\cite{gutknecht2020bits}).

The basic setup for the $i_\cap^{\sx}$ measure is a  probability space $(\Omega, \mathfrak{A}, \mathbb{P})$ and discrete and finite random variables $S_1, ..., S_n, T$ be on that space, i.e. 
\begin{align*}
\small
&S_i: \Omega \rightarrow E_{S_i}, \hspace{0.3cm }(\mathfrak{A},\mathscr{P}( E_{S_i}))-\text{measurable} \\
&T: \Omega \rightarrow E_{T}, \hspace{0.3cm }(\mathfrak{A},\mathscr{P}(E_{T}))-\text{measurable},
\end{align*}
where $E_{S_i}$ and $E_T$ are the finite alphabets of the corresponding random variables and $\mathscr{P}(E_{S_i})$ and $\mathscr{P}( E_{T}) $ are the power sets of these alphabets. Now let the set of antichains of  $(\mathcal{P}(\{1,\ldots,n\}),\subseteq)$ be denoted by $\mathscr{A}_n$.

Then the \emph{local shared information} $i_\cap^{\sx}(t:\alpha)$ of an antichain $\alpha = \{\mathbf{a}_1 , \dots, \mathbf{a}_m\}$ (representing a set of collections of source realizations) about the target realization $t \in E_T$ is  defined in terms of the original probability measure $\mathbb{P}$ as a function $i_\cap^{\sx}: E_T \times \mathscr{A}_n \rightarrow \mathbb{R}$ with
\begin{equation} \label{eqn_isx_discrete_case}
\small
i_\cap^{\sx}(t:\alpha) = i_\cap^{\sx} (t:\mathbf{a}_1; \dots; \mathbf{a}_m) := \log_2\frac{\mathbb{P}\left(\mathfrak{t}|\bigcup_{i=1}^m \mathfrak{a}_i \right)}{\mathbb{P}(\mathfrak{t})} \; ,
\end{equation}
where $\mathfrak{a}_i = \bigcap_{j \in \mathbf{a}_i} \{S_j=s_j\} $ and $\mathfrak{t}= \{T=t\}$. A special case of this quantity  is the local shared information of a \emph{complete} sequence of source realizations $(s_1, \dots, s_n)$ about the target realization $t$. This is obtained by setting $\mathbf{a}_i = \{i\}$ and $m=n$:
\begin{equation*}
\small
i_\cap^{\sx} (t:\{1\};\dots;\{n\}) = \log_2\frac{\mathbb{P}\left(\mathfrak{t}|\bigcup_{i=1}^n  \{S_i=s_i\}\right)}{\mathbb{P}(\mathfrak{t})} \; .
\end{equation*}

One could obtain the same form of \cref{eqn_isx_discrete_case} including the probability mass instead of the probability measure, by definition of an auxiliary random variable $\mathcal{W}_\alpha = \left( \bigvee_{i=1}^m \bigwedge_{j \in \textbf{a}_i} S_j = s_j \right)$, where this auxiliary ``classifier'' random variable is one if the statement in the brackets is true, and zero otherwise. As can be seen in \cite{makkeh2021isx}, this auxiliary classifier random variable then provides exactly the same information as all the members of the antichain do. Conceptually, the truth of $\mathcal{W}_\alpha$ can be obtained from any of the members of the antichain, and thus captures the information which has been \emph{shared}, or is given \emph{redundantly}, by all these members. Then $\mathcal{W}_\alpha$ allows for rewriting \cref{eqn_isx_discrete_case} as
\begin{equation} \label{eqn_isx_discrete_case_auxiliary_random_variable}
\small
i_\cap^{\sx}(t:\alpha) = \log_2\frac{p\left(t|\mathcal{W}_\alpha = \mathrm{true} \right)}{p(t)} \;.
\end{equation}
This sort of form particularly becomes important when generalizing the equations given here, as the probability mass function is the density function in the discrete case.

Rewriting $i_\cap^{\sx}$ as in \cref{eqn_isx_discrete_case} allows us to decompose it into the difference of two positive parts:
\begin{equation*}
\small
\begin{split}
i_\cap^{\sx} (t:\mathbf{a}_1; \dots; \mathbf{a}_m)    &= \log_2\frac{\mathbb{P}\left(\mathfrak{t} \cap \bigcup_{i=1}^m \mathfrak{a}_i \right)}{\mathbb{P}(\mathfrak{t}) \mathbb{P}\left(\bigcup_{i=1}^m \mathfrak{a}_i \right)} = \log_2\frac{1}{ \mathbb{P}\left(\bigcup_{i=1}^m \mathfrak{a}_i \right)} -  \log_2\frac{\mathbb{P}(\mathfrak{t})}{\mathbb{P}\left(\mathfrak{t} \cap \bigcup_{i=1}^m \mathfrak{a}_i \right)} \; ,
\end{split}
\end{equation*}
We call 
\begin{equation*}
\small
i_\cap^{\sx+}(t : \mathbf{a}_1; \dots; \mathbf{a}_m) := \log_2 \frac{1}{ \mathbb{P}\left(\bigcup_{i=1}^m \mathfrak{a}_i \right)}
\end{equation*}
the \emph{informative} local shared information and
\begin{equation*}
\small
i_\cap^{\sx-}(t : \mathbf{a}_1; \dots; \mathbf{a}_m) := \log_2\frac{\mathbb{P}(\mathfrak{t})}{\mathbb{P}\left(\mathfrak{t} \cap \bigcup_{i=1}^m \mathfrak{a}_i \right)}
\end{equation*}
the \emph{misinformative} local shared information. These terms can be shown to induce non-negative components $\pi^{\sx+}$ and $\pi^{\sx-}$ \cite{makkeh2021isx}. Such a separation is desirable because $I_\cap^{\sx}$ as well as the implied decomposition $\Pi^{\sx}$ may have negative values for some antichains $\alpha$ / parthood distributions $f$. However, from an information theoretic standpoint it is important to have an explanation for this fact. After all, (global) mutual information itself is always non-negative. The idea is that negative values of $I_\cap^{\sx}$ and $\Pi^{\sx}$ can be interpreted as misinformation, i.e. in terms of situations in which the source realizations are misleading with respect to the actual target realization. This explanation is then supported by the fact that if informative and misinformative contributions are separated, then everything is non-negative again (for more details on informative/misinformative components see also \cite{finn2018probability}).

In the next section we will provide the necessary mathematical preliminaries for constructing a continuous extension of $i_\cap^{\sx}$. Due to the conditional probability involved in its definition, this will in particular include a discussion of the regular conditional probability.

\section{Preliminaries for a continuous extension of $i_\cap^{\sx}$} \label{section_mathematical_background}

This section considers itself with laying the groundwork for a theory of regular conditional probability. This is necessary since equations of the form \cref{eqn_isx_discrete_case} condition on single point events. In elementary probability, conditioning on such events is impossible as they have an associated measure of 0. Regular conditional probability, however, provides a rigorous framework for disintegrating the image measure while conserving the properties needed for a suitable candidate for generalization.
 
We choose the underlying space in a way that all probability measures, and all random variables, admit of a regular conditional probability. These spaces we will call \emph{Radon Borel probability spaces}. This definition offers a short notation of the complex objects seen throughout most literature.

\begin{definition}[Radon Borel Probability Spaces] \label{def_radon_borel_probability_spaces}
Let $(\Omega, \tau_\Omega)$ be a Radon topological space with respect to the Borel $\sigma$-algebra $\mathcal{B}(\Omega)$ generated by $\tau_\Omega$. Then we call $\left((\Omega, \tau_\Omega), \mathcal{B}(\Omega) \right)$ a \emph{Radon Borel measurable space}. 

Further consider a complete probability measure $\mathbb{P}$ on $\mathcal{B}(\Omega)$. Then we call the probability space $\left((\Omega, \tau_\Omega), \mathcal{B}(\Omega), \mathbb{P} \right)$ a \emph{Radon Borel probability space}.\footnote{Note that this definition makes every Radon Borel probability space a standard probability space.}
\end{definition}

In order to define a regular conditional probability, we need to disintegrate the image probability measure with respect to an outcome of the associated random variable. This idea was first utilized by von Neumann for probability measures in order to investigate the connection between measure theory and ergodic theory\cite{disintegration_theorem_neumann}. Therefore we will state an adapted version of Lebesgue's disintegration theorem:

\begin{theorem}[Disintegration theorem] \label{thrm_disintegration}
Let $\left((\Omega, \tau_\Omega), \mathcal{B}(\Omega), \mathbb{P} \right), \left((E, \tau_E), \mathcal{B}(E) \right)$ be a Radon Borel probability space, and a Radon Borel measurable space, respectively.
Extend $\left((E, \tau_E), \mathcal{B}(E) \right)$ to a Radon Borel probability space by the pushforward of $\mathbb{P}$ by a random variable $S: (\Omega, \mathcal{B}(\Omega), \mathbb{P})  \rightarrow (E, \mathcal{B}(E), \mathbb{P}^S := \mathbb{P} \circ S^{-1})$. 

Then there exists a $\mathbb{P}^S$-almost everywhere unique family of probability measures $\{\mathbb{P}_{x}\}_{x \in E}$ fulfilling:

\begin{itemize}
    \item $x \mapsto \mathbb{P}_{x}(B)$ is Borel measurable $\forall B \in \mathcal{B}(\Omega)$
    \item $\mathbb{P}_{x}(\Omega \setminus S^{-1}(x)) = 0$, i.e.~the disintegrated measure is not supported outside of the preimage of $x$ under $S$
    \item for all $\mathcal{B}(\Omega)$-measurable functions $f: \Omega \rightarrow \mathbb{R}$ there holds:
    \begin{align} \label{eqn_thrm_disintegration_defining_equality}
        \int\limits_{\Omega} f(\omega) d\mathbb{P}(\omega) = \int\limits_{E} d\mathbb{P}^S(x) \int\limits_{S^{-1}(x)} f(\omega) d\mathbb{P}_{x}(\omega) \; .
    \end{align} 
\end{itemize}
The family of probabilities, considered as a map $\nu: E \times \mathcal{B}(\Omega) \rightarrow [0,1], (x, B) \mapsto \mathbb{P}_{x}(B)$ is also called a \emph{transition probability}.
\end{theorem}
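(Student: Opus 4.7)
The plan is to construct the family $\{\mathbb{P}_{x}\}_{x \in E}$ as a regular conditional probability, exploiting the fact (noted in the footnote to \cref{def_radon_borel_probability_spaces}) that every Radon Borel probability space is standard. For each fixed $B \in \mathcal{B}(\Omega)$ the set function $A \mapsto \mathbb{P}(B \cap S^{-1}(A))$ defines a finite measure on $\mathcal{B}(E)$ that is absolutely continuous with respect to $\mathbb{P}^{S}$. The Radon--Nikodym theorem then yields a $\mathbb{P}^{S}$-almost everywhere unique Borel measurable density $g_{B}: E \to [0,1]$ satisfying
\begin{equation*}
\mathbb{P}(B \cap S^{-1}(A)) = \int_{A} g_{B}(x)\,d\mathbb{P}^{S}(x) \quad \text{for all } A \in \mathcal{B}(E),
\end{equation*}
and I would take $\mathbb{P}_{x}(B) := g_{B}(x)$ as the candidate transition probability.

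The main obstacle is that, a priori, each $g_{B}$ is only defined up to a $\mathbb{P}^{S}$-null set depending on $B$, and uncountably many such ambiguities could destroy countable additivity in $B$. This is where the Radon Borel structure becomes essential: $\mathcal{B}(\Omega)$ is countably generated by an algebra $\mathcal{C}$ obtained from a countable base of $\tau_{\Omega}$, and one may select versions of $g_{B}$ for $B \in \mathcal{C}$ that jointly satisfy finite additivity, monotonicity, and $g_{\Omega} \equiv 1$ outside a single $\mathbb{P}^{S}$-null set $N \subset E$. Invoking either Carath\'eodory extension together with continuity from below, or directly the Jirina/Kolmogorov existence theorem for regular conditional probabilities on standard Borel spaces, one upgrades this pre-measure into a genuine probability measure $\mathbb{P}_{x}$ on all of $\mathcal{B}(\Omega)$ for each $x \notin N$; on $N$ one sets $\mathbb{P}_{x}$ arbitrarily. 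Borel measurability of $x \mapsto \mathbb{P}_{x}(B)$ for general $B$ then follows from a monotone class argument starting from $\mathcal{C}$, where it holds by construction.

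Next I would verify the support property. For any $A \in \mathcal{B}(E)$, applying the defining relation to $B = S^{-1}(E \setminus A)$ yields $\int_{A} \mathbb{P}_{x}(S^{-1}(E \setminus A))\,d\mathbb{P}^{S}(x) = \mathbb{P}(S^{-1}(E \setminus A) \cap S^{-1}(A)) = 0$, hence $\mathbb{P}_{x}(S^{-1}(E \setminus A)) = 0$ for $\mathbb{P}^{S}$-a.e.\ $x \in A$. Since $E$ is second countable under its Radon topology, a countable separating family of Borel sets exists, and intersecting the resulting exceptional sets shows that $\mathbb{P}_{x}$ concentrates on $S^{-1}(x)$ for $\mathbb{P}^{S}$-a.e.\ $x$. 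Finally, the integration identity \cref{eqn_thrm_disintegration_defining_equality} is obtained by the standard machinery: it holds by construction for $f = \mathds{1}_{B}$, extends to simple $f$ by linearity, to nonnegative measurable $f$ by monotone convergence, and to general $\mathcal{B}(\Omega)$-measurable $f$ by splitting into positive and negative parts. The $\mathbb{P}^{S}$-a.e.\ uniqueness of $\{\mathbb{P}_{x}\}$ then follows from the uniqueness clause of Radon--Nikodym applied to a countable generating $\pi$-system for $\mathcal{B}(\Omega)$.

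The principal difficulty throughout is the second step: upgrading the collection of Radon--Nikodym densities into an honest measure-valued map $x \mapsto \mathbb{P}_{x}$. Without the Radon Borel (equivalently, standard) hypothesis this upgrade can genuinely fail, so the theorem really serves as a bridge between the general Radon--Nikodym machinery and the topological regularity baked into the underlying spaces; I would therefore quote the existence of regular conditional probabilities on standard Borel spaces rather than reprove it in detail.
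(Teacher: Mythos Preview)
The paper does not actually prove this theorem: it is stated as a classical result from the literature, with the remark that ``there is a number of different formulations of this theorem, with the corresponding proofs varying just as much'' and citations to Pachl, Billingsley, and Chang--Pollard. So there is no paper proof to compare against.

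That said, your sketch is a correct outline of one of the standard routes to the disintegration theorem: build candidate conditional measures as Radon--Nikod\'{y}m densities $g_B$, then use the countably generated (equivalently, standard Borel) structure to upgrade the family $\{g_B\}$ to genuine probability measures $\mathbb{P}_x$ outside a single null set, and finally verify the support and integration identities by monotone-class and approximation arguments. You are also right to flag the regularisation step as the crux and to note that it can fail without the standardness hypothesis. The only point worth tightening is the support argument: to pass from ``$\mathbb{P}_x(S^{-1}(E\setminus A))=0$ for $\mathbb{P}^S$-a.e.\ $x\in A$, for every $A$ in a countable separating family'' to ``$\mathbb{P}_x(\Omega\setminus S^{-1}(x))=0$ for $\mathbb{P}^S$-a.e.\ $x$'' you need that singletons $\{x\}$ are Borel and can be written as countable intersections of members of the separating family (which does hold in a second-countable metrizable space), so that the complement $\Omega\setminus S^{-1}(x)$ is a countable union of sets each of $\mathbb{P}_x$-measure zero.
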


Note that there is a number of different formulations of this theorem, with the corresponding proofs varying just as much \cite{disintegration_theorem_pachl, conditional_expected_value, disintegration_theorem_chang_pollard}. Some explicitly require the additional property of the Borel algebras to be countably generated. This property is,  however, included in the version here, if one recalls that the separability of Radon spaces implies the corresponding Borel algebras to be countably generated. Furthermore, since these disintegrated measures are still probability measures over a Radon space, they are automatically Radon measures, and hence regular.

The disintegration theorem allows us to define a regular conditional probability \cite{regular_conditional_probability}.

\begin{theorem}[Regular conditional probability] \label{thrm_regular_conditional_probability}
Let $(\Omega, \mathcal{A}, \mathbb{P})$ and $(E, \mathcal{E})$ be Radon Borel probability and Radon Borel measurable spaces, respectively, with a surjective random variable $S$ mapping between them. Let further $\mathcal{E}$ be generated via $S$, i.e. $\mathcal{E} = \{S(A) | A \in \mathcal{A} \}$, and consider a $\sigma$-sub-algebra $\mathcal{F} \subset \mathcal{E}$. Observe that $\mathcal{F}$ is generated by a $\sigma$-sub-algebra $\mathcal{C} \subset \mathcal{A}$, that is $\mathcal{C} = S^{-1}\left(\mathcal{F}\right)$. 

Then there exists a transition probability $\nu_{\mathcal{F}}: E \times \mathcal{A} \rightarrow [0,1]$, satisfying 
\begin{align} \label{eqn_regular_conditional_probability_both_spaces}
    \mathbb{P}(A \cap S^{-1}(F)) = \int\limits_F \nu_{\mathcal{F}}(x, A) d(\mathbb{P} \circ S^{-1})(x)
\end{align}
for any $A \in \mathcal{A}$ and $F \in \mathcal{F}$. This transition probability is called a \emph{regular conditional probability, conditioned on} $\mathcal{F}$. For $C \in \mathcal{C}$, this can be pulled back to an integral on $\Omega$ as follows:
\begin{align}\label{eqn_regular_conditional_probability_sample_space}
    \mathbb{P}(A \cap C) = \int\limits_C \nu_{\mathcal{C}}(\omega, A) d\mathbb{P}(\omega) \; .
\end{align}
Here $\nu_{\mathcal{C}}:\Omega \times \mathcal{A} \rightarrow [0,1]$ is a regular conditional probabilty conditioned on $\mathcal{C}$,  fulfilling $\nu_{\mathcal{C}}(\omega, A) = \nu_{\mathcal{F}}(S(\omega), A)$.
\end{theorem}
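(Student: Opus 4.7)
The plan is to derive this theorem as a direct consequence of the disintegration theorem (\cref{thrm_disintegration}). Applying that theorem to the random variable $S\colon \Omega \to E$ yields a $\mathbb{P}^S$-almost everywhere unique family $\{\mathbb{P}_x\}_{x \in E}$ of probability measures on $\mathcal{A}$ with $\mathbb{P}_x$ supported on $S^{-1}(x)$. I would simply define the candidate transition probability by $\nu_{\mathcal{F}}(x, A) := \mathbb{P}_x(A)$ for $x \in E$ and $A \in \mathcal{A}$. The two transition-probability axioms are then immediate: Borel measurability of $x \mapsto \mathbb{P}_x(A)$ is asserted in \cref{thrm_disintegration}, and each $\mathbb{P}_x$ is a probability measure on $\mathcal{A}$ by construction.

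To verify \cref{eqn_regular_conditional_probability_both_spaces}, I would specialize the disintegration identity \cref{eqn_thrm_disintegration_defining_equality} to the test function $f = \mathbf{1}_{A \cap S^{-1}(F)}$ for $A \in \mathcal{A}$ and $F \in \mathcal{F}$. The left-hand side then becomes $\mathbb{P}(A \cap S^{-1}(F))$. The crucial observation for the right-hand side is that, since $\mathbb{P}_x$ is concentrated on $S^{-1}(x)$, every $\omega$ in its support satisfies $\omega \in S^{-1}(F) \Leftrightarrow S(\omega) = x \in F$, so $\mathbf{1}_{S^{-1}(F)}(\omega) = \mathbf{1}_F(x)$ $\mathbb{P}_x$-almost everywhere. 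Consequently the inner integral collapses to $\mathbf{1}_F(x)\,\nu_{\mathcal{F}}(x, A)$, and integrating against $\mathbb{P}^S$ produces $\int_F \nu_{\mathcal{F}}(x, A)\, d\mathbb{P}^S(x)$, which is the required identity.

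For the pulled-back identity \cref{eqn_regular_conditional_probability_sample_space} on $\Omega$, I would use surjectivity of $S$ to write each $C \in \mathcal{C}$ as $C = S^{-1}(F)$ for a uniquely determined $F \in \mathcal{F}$ (namely $F = S(C)$; surjectivity ensures $S \circ S^{-1}(F) = F$). Set $\nu_{\mathcal{C}}(\omega, A) := \nu_{\mathcal{F}}(S(\omega), A)$ and invoke the standard change-of-variables formula for the pushforward $\mathbb{P}^S = \mathbb{P} \circ S^{-1}$:
\begin{equation*}
\int_F \nu_{\mathcal{F}}(x, A)\, d\mathbb{P}^S(x) = \int_{S^{-1}(F)} \nu_{\mathcal{F}}(S(\omega), A)\, d\mathbb{P}(\omega) = \int_C \nu_{\mathcal{C}}(\omega, A)\, d\mathbb{P}(\omega) \; .
\end{equation*}
Combining this with the identity from the previous paragraph and noting that $A \cap C = A \cap S^{-1}(F)$ produces \cref{eqn_regular_conditional_probability_sample_space}.

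I do not anticipate a deep obstacle; the theorem is essentially a re-packaging of \cref{thrm_disintegration} once the correct candidate is written down. The one place requiring a little bookkeeping is the uniqueness/measurability statement: \cref{thrm_disintegration} only guarantees $\mathbb{P}^S$-a.e. uniqueness of $\{\mathbb{P}_x\}$, which translates into $\nu_{\mathcal{F}}$ being determined only up to $\mathbb{P}^S$-null sets (and $\nu_{\mathcal{C}}$ only up to $\mathbb{P}$-null sets in $\mathcal{C}$). Verifying that $\omega \mapsto \nu_{\mathcal{C}}(\omega, A)$ is $\mathcal{C}$-measurable (rather than merely $\mathcal{A}$-measurable) uses the fact that this map factors as $\nu_{\mathcal{F}}(\cdot, A) \circ S$ through $S\colon (\Omega, \mathcal{C}) \to (E, \mathcal{F})$, so the $\mathcal{F}$-measurability of $x \mapsto \nu_{\mathcal{F}}(x, A)$ immediately yields the $\mathcal{C}$-measurability on $\Omega$.
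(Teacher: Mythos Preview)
The paper does not supply its own proof of this theorem: it is stated as a cited result (reference~\cite{regular_conditional_probability}), preceded by the remark that ``the disintegration theorem allows us to define a regular conditional probability.'' Your derivation via \cref{thrm_disintegration}, specializing $f=\mathbf{1}_{A\cap S^{-1}(F)}$ and then pulling back by change of variables, is correct and is precisely the route the paper gestures at without writing out.

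One small overclaim in your final paragraph: you assert $\mathcal{F}$-measurability of $x\mapsto \nu_{\mathcal{F}}(x,A)$, but \cref{thrm_disintegration} only gives Borel ($=\mathcal{E}$) measurability, so the composition $\nu_{\mathcal{F}}(S(\cdot),A)$ is a priori $\mathcal{A}$-measurable rather than $\mathcal{C}$-measurable. This does not affect the theorem as stated in the paper, which requires only that $\nu_{\mathcal{C}}$ be a transition probability and satisfy \cref{eqn_regular_conditional_probability_sample_space}; both hold by your argument. If one genuinely wanted $\mathcal{C}$-measurability, one would need to disintegrate along a map into $(E,\mathcal{F})$ rather than $(E,\mathcal{E})$, but the paper does not demand this refinement.
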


Another perspective, which at the same time assures the existence of the regular conditional probability, is given by a theorem by Radon \cite{radon_nikodym_throrem_radon}, which was later generalized to its present form by Nikod\'{y}m \cite{radon_nikodym_throrem_nikodym}. This result is particularly important, since it allows for a tractable form of the regular conditional probability, from which several properties, of the regular conditional probability itself, as well as of the composed quantities presented in \cref{section_continuation_PID}, can be derived. These properties, found in \cref{section_continuation_PID_subsection_properties} are crucial, specifically for possible applications e.g.~deriving learning rules for neural networks and distributed computing.

We start by defining the notion of absolute continuity that is one of the necessary conditions for the Radon-Nikod\'{y}m theorem. 

\begin{definition}[{Absolute Continuity~\cite{conditional_expected_value}}] \label{def_absolute_continuity}
Let $\mu, \nu$ be $\sigma$-finite and finite positive measures on a measure space $(\Omega, \mathcal{A})$. Then $\nu$ is said to be \emph{absolutely continuous} with respect to $\mu$, denoted by $\nu \ll \mu$, iff $\mu(A)=0 \implies \nu(A)=0 \; \forall A \in \mathcal{A}$.
\end{definition}

\begin{theorem}[{Radon-Nikod\'{y}m Theorem~\cite{radon_nikodym_throrem_nikodym}}] \label{thrm_radon_nikodym}
Let the setting be as in \cref{def_absolute_continuity}.

Suppose that $\nu \ll \mu$, then there exists an $\mathcal{A}$-measurable $\mu$-almost everywhere unique function $f_{RN} \in L^1(\Omega)$, $f_{RN}: \Omega \rightarrow \mathbb{R}^+$, such that for all $E \in \mathcal{A}$
\begin{align*}
\nu(E) = \int_E f_{RN}(x) d\mu(x) \; .
\end{align*}
The function $f_{RN}$ is called Radon-Nikod\'{y}m derivative, commonly denoted by $\frac{d\nu}{d\mu}$.
\end{theorem}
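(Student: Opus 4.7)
The plan is to follow von Neumann's Hilbert-space approach, which reduces the existence claim to an application of the Riesz representation theorem on $L^2$. I would first introduce the dominating finite positive measure $\lambda := \mu + \nu$ on $(\Omega, \mathcal{A})$, and on the Hilbert space $L^2(\Omega, \mathcal{A}, \lambda)$ consider the linear functional $T: f \mapsto \int_\Omega f \, d\nu$. The estimate $|T(f)| \le \sqrt{\nu(\Omega)} \, \|f\|_{L^2(\lambda)}$, supplied by Cauchy--Schwarz together with the finiteness $\nu(\Omega) \le \lambda(\Omega) < \infty$, shows $T$ is continuous. Riesz representation then produces a $g \in L^2(\lambda)$ with
\begin{equation*}
\int_\Omega f \, d\nu \;=\; \int_\Omega f g \, d\lambda
\end{equation*}
for every $f \in L^2(\lambda)$.

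Next I would verify that $0 \le g \le 1$ holds $\lambda$-almost everywhere by testing the identity against the indicators of $\{g < 0\}$ and $\{g > 1\}$, which produce contradictions with the non-negativity of $\nu$ and with $\nu \le \lambda$, respectively. Substituting $\lambda = \mu + \nu$ and rearranging yields
\begin{equation*}
\int_\Omega f (1-g) \, d\nu \;=\; \int_\Omega f g \, d\mu \; ,
\end{equation*}
initially on $L^2$-indicators and then, via monotone convergence, for all non-negative measurable $f$. Choosing $f = \mathbf{1}_{\{g=1\}}$ forces $\mu(\{g=1\}) = 0$, and this is exactly the step at which the hypothesis $\nu \ll \mu$ is invoked: it upgrades the equality to $\nu(\{g=1\}) = 0$ as well. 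I can therefore set $f_{RN} := g/(1-g)$ on $\{g<1\}$ and $0$ on $\{g=1\}$. Substituting $f = \mathbf{1}_E/(1-g)$ (first truncated by $n$ and then passed to the limit by monotone convergence) into the rearranged identity delivers $\nu(E) = \int_E f_{RN} \, d\mu$ for every $E \in \mathcal{A}$, and $f_{RN} \in L^1(\mu)$ follows directly from $\nu(\Omega) < \infty$.

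Uniqueness is routine: if two candidates $f_1, f_2$ satisfy the stated identity, then $\int_E (f_1 - f_2) \, d\mu = 0$ for every $E \in \mathcal{A}$, and splitting $E$ into $\{f_1 > f_2\}$ and $\{f_2 > f_1\}$ forces $f_1 = f_2$ $\mu$-almost everywhere. The main obstacle I anticipate is the careful bookkeeping around the exceptional set $\{g = 1\}$; this is the single step where absolute continuity is genuinely used, and without it the candidate derivative would blow up on a set of positive $\nu$-measure and fail to be integrable. A brief additional argument is also needed to pass from the finite setting just described to a $\sigma$-finite $\mu$, by decomposing $\Omega$ into a countable disjoint union of sets of finite $\mu$-measure and gluing together the resulting piecewise derivatives.
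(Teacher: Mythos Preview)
Your proposal is correct; it is von Neumann's standard Hilbert-space proof of the Radon--Nikod\'ym theorem, and the bookkeeping you flag around $\{g=1\}$ and the $\sigma$-finite extension is handled appropriately. Note, however, that the paper itself does not prove this theorem at all: it is stated in the preliminaries section as a classical result with a citation to Nikod\'ym's original 1930 paper, and is simply invoked as background for the construction that follows. There is therefore nothing to compare against on the paper's side; your write-up supplies a complete argument where the paper intentionally supplies none.
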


Hence we can now rewrite the regular conditional probability as such a derivative via an auxiliary measure $\eta_A$.

With the statement from \cref{thrm_radon_nikodym} and using the same notation as in \cref{eqn_regular_conditional_probability_sample_space}, define a new measure for each $A \in \mathcal{A}$ by $\eta_A(C):= \mathbb{P}(A \cap C)$ for a set $A \in \mathcal{A}$ such that any regular conditional probability conditioned on some $\mathcal{C} \subset \mathcal{A}$, fulfills
\begin{align*}
    \eta_A (C) = \int\limits_C \nu_{\mathcal{C}}(\omega, A) d\mathbb{P}(\omega) \; \forall C \in \mathcal{C} \; .
\end{align*}

With $\mathbb{P}$ and $\eta_A$ being $\sigma$-finite positive measures for all $A \subset \Omega$ on $(\Omega, \mathcal{A}, \mathbb{P})$, and clearly $\eta_A \ll \mathbb{P}$ holding for all $A$, the corresponding Radon-Nikod\'{y}m derivative exists and is $\mathbb{P}$-almost everywhere uniquely given by
\begin{align}
    \frac{d\eta_A}{d\mathbb{P}}(\omega) = \nu_{\mathcal{C}}(\omega, A) \; . \label{eqn_regular_conditional_probability_as_rn_derivative}
\end{align}
Equivalently, with \cref{eqn_regular_conditional_probability_both_spaces} one can write the Radon-Nikod\'{y}m derivative on the mapped space $(E, \mathcal{E})$. With $S^{-1}(x) = \omega$, we arrive at
\begin{align} \label{eqn_rn_derivative_mapped_space}
    \frac{d\eta_A}{d\mathbb{P}}(S^{-1}(x)) = \left(\frac{d\eta_A}{d\mathbb{P}} \circ S^{-1} \right) (x) = \frac{d\eta^S_A}{d\mathbb{P}^S} (x) = \nu_{\mathcal{F}}(x, A) \; .
\end{align}
The second equality in \cref{eqn_rn_derivative_mapped_space} can easily be proven by using the change-of-variables formula.  Here the superscript $\cdot^S$ denotes the push forward to $S(\Omega)$ by $S$, i.e. $\eta_A^S = \eta_A \circ S^{-1}$. This notation will be kept throughout this work and also apply to multiple random variables in a straightforward way. 

This expression rigorously formalizes the notion of a conditional probability in a continuous setting, specifically conditioning to a specific point with respect to $\mathcal{C} / \mathcal{F}$. 

Further, since \cref{thrm_radon_nikodym} assures the existence of a suitable regular conditional probability independent of the sub-$\sigma$-algebra to condition on, one can drop the specifier $\mathcal{C}$, and work with only $\nu$ instead.

Additionally, we want to connect the regular conditional probabilities introduced here to probability densities as used in practical approaches. To that end, consider that all density functions consist of Radon-Nikod\'{y}m derivatives of a pushed forward probability measure with respect to an auxiliary measure $\lambda_S$. In the case of a random variable $S$ we obtain its density as $p(s) = \frac{d\mathbb{P}^S}{d\lambda_S}$. In many cases, this auxiliary measure is either Lebesgue's measure (continuous settings), or the Dirac measure. These cases correspond to $\Omega$ being uncountable or countable, respectively.

To make sure that a measure $\lambda_S$ with respect to which we take the Radon-Nikod\'{y}m derivative actually exists, we recall a famous isomorphism theorem \cite{standard_borel_spaces_isomorphism_theorem}:

\begin{theorem}[Isomorphism Theorem of Standard Borel spaces]
\label{thrm_isomorphisms_standard_borel_spaces}
Let $E, F$ be standard Borel spaces. Then there exists a Borel isomorphism between $E$ and $F$ if and only if $\text{card}(E) = \text{card}(F)$. \\
Specifically, any standard Borel space is either isomorphic to $(I, \mathcal{B}(I))$ for an interval $I \subset \mathbb{R}$, a countably infinite or finite standard Borel space $(\mathbb{S}, \mathcal{B}(\mathbb{S}))$ for a set $\mathbb{S} \subseteq \mathbb{N}$, or a disjoint union of both.
\end{theorem}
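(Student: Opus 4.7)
The plan is to prove the biconditional directly, then extract the structural trichotomy as a corollary. One direction is immediate: any Borel isomorphism is in particular a set-theoretic bijection, so $\text{card}(E) = \text{card}(F)$. The content of the theorem is the converse together with the canonical-form statement.

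For the converse I would split into cases according to cardinality. If $E$ and $F$ are both countable standard Borel spaces of equal cardinality, then since a Polish realisation is $T_1$ every singleton is Borel, so countability forces the Borel $\sigma$-algebra to coincide with the full power set; consequently any bijection between them is automatically a Borel isomorphism. If instead both $E$ and $F$ are uncountable, the strategy is to show that each is individually Borel isomorphic to $([0,1], \mathcal{B}([0,1]))$, whence $E \cong F$ by composition. To achieve this I would construct two Borel injections:
\begin{enumerate}
\item A Borel injection $E \hookrightarrow [0,1]$: a Polish realisation of $E$ has a countable separating family of Borel sets, and mapping each point to the characteristic sequence of its memberships gives a Borel injection into the Cantor space $\{0,1\}^{\mathbb{N}}$, which in turn embeds as the standard Cantor subset of $[0,1]$ by ternary expansion.
\item A Borel injection $[0,1] \hookrightarrow E$: by a Cantor--Bendixson argument the uncountable Polish space underlying $E$ has a nonempty perfect kernel, every nonempty perfect Polish space contains a homeomorphic copy of $\{0,1\}^{\mathbb{N}}$, and $\{0,1\}^{\mathbb{N}}$ Borel-surjects onto $[0,1]$ with a Borel section obtained by selecting the non-eventually-one binary expansion of each real.
\end{enumerate}
A Borel version of the Schröder--Bernstein theorem then produces a bijection whose graph decomposes into countably many Borel pieces on which it alternates between the two injections, giving a Borel isomorphism in both directions.

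For the classification statement, I would decompose any standard Borel space into its (at most countable) set of Borel-isolated points and its complement; the complement is either empty, countable, or itself an uncountable standard Borel space, to which the countable case or the uncountable case just proved applies. Reassembling yields the advertised trichotomy of an interval, a countable Borel space with $\sigma$-algebra $\mathcal{B}(\mathbb{S})$, or a disjoint union of the two.

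The main obstacle is the Borel Schröder--Bernstein step. The classical set-theoretic proof constructs the desired bijection by iterating each injection and partitioning the two domains into orbits of the resulting dynamical system, and one has to verify that these countably many operations on Borel sets and maps yield Borel pieces, and that the inverse of the assembled bijection is again Borel measurable. The auxiliary step of producing a perfect subset in (ii), and within it a homeomorph of the Cantor space, is similarly nontrivial but a standard fact of descriptive set theory and can be invoked directly.
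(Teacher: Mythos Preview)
The paper does not actually prove this theorem: it is stated as a classical result from descriptive set theory and simply cited to Kechris, \emph{Classical Descriptive Set Theory}. There is therefore no ``paper's own proof'' to compare against; the statement functions purely as background input to Corollary~3.1, which is the only part the authors argue for themselves.

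Your sketch is the standard textbook argument and is essentially correct. Two small remarks. First, in your classification step you write that the complement of the isolated points is ``either empty, countable, or itself an uncountable standard Borel space''; in fact the Cantor--Bendixson decomposition gives a \emph{perfect} kernel, which is either empty or uncountable, never countably infinite, so the middle option does not occur. Second, the trichotomy as phrased in the paper is mildly redundant: a disjoint union of an interval and a countable set is itself an uncountable standard Borel space and hence, by the first part of the theorem, already Borel isomorphic to an interval. The authors seem to keep the ``disjoint union'' form only because it feeds directly into the discrete-plus-continuous decomposition of measures in the subsequent corollary, not because it is a genuinely distinct isomorphism class. Your argument recovers that form, so it matches what the paper needs.
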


With \cref{thrm_isomorphisms_standard_borel_spaces} we can conclude that there must always exist a decomposition of the original probability measure into discrete and continuous parts. Moreover, we can construct a reference measure with respect to which we can sensibly define a density function $p$ as above.

\begin{corollary} \label{corollary_isomorphisms_standard_borel_spaces}

Let $\left((E, \tau_E), \mathcal{B}(E), \mathbb{P}^S \right)$ be a Radon Borel probability space. 

Then $\mathbb{P}^S$ has a decomposition into discrete and continuous parts, acting on the discrete and continuous components of $\left((E, \tau_E), \mathcal{B}(E) \right)$. More precisely, up to an isomorphism $\mathbb{P}^S$ can be taken to be of the form of a weighted combination of Dirac and Lebesgue measure, each supported precisely on the respective disjoint parts of the probability space which are either isomorphic to $(I, \mathcal{B}(I))$ or $(\mathbb{S}, \mathcal{B}(\mathbb{S})), \mathbb{S} \subseteq \mathbb{N}$ as in \cref{thrm_isomorphisms_standard_borel_spaces}. Call this isomorphism $\zeta$. Then 
\begin{align} \label{eqn_corollary_form_probability_measures}
    \mathbb{P}^S(A)= \sum\limits_{i=0}^{\text{card}(\mathbb{S})} p_{x_i}\delta^\zeta_{x_i}(\text{supp}_{\text{disc}} \cap A) + \int_{\text{supp}_{\text{cont}} \cap A} f(x) dm^\zeta(x) \;\; \forall A \in \mathcal{B}(E) \; .
\end{align} 
Here by $\delta^\zeta_x$ we denote the pushed forward Dirac measure only supported on $\{x\}$, by $m^\zeta$ the image Lebesgue measure, and by $\text{supp}_{\text{i}}$ the continuous and discrete parts of the probability space. Further we denote by $p_{x_i}$ the discrete probability masses of the atoms $\{x_i\}$ and by the non-negative function $f$ the continuous density. Note that $\sum_i p_{x_i} + \int_{E} f(x) dm^\zeta(x) = 1$. Moreover, considering the reference measure
\begin{align} \label{eqn_corollary_form_reference_measures}
    \lambda^S(A)= \sum\limits_{i=0}^{\text{card}(\mathbb{S})} \delta^\zeta_{x_i}(\text{supp}_{\text{disc}} \cap A) + m^\zeta(\text{supp}_{\text{cont}} \cap A) \;\; \forall A \in \mathcal{B}(E) \; ,
\end{align} 
we find that $\mathbb{P}^S \ll \lambda^S$.
Note that these supports are technically not necessary, however, as it appeals more to the intuition to separate the individual supports which make up $\left((E, \tau_E), \mathcal{B}(E) \right)$, we formulate the corollary in this way.
\end{corollary}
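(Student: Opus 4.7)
The plan is to deduce the decomposition by transporting $\mathbb{P}^S$ along the Borel isomorphism supplied by \cref{thrm_isomorphisms_standard_borel_spaces}, analyzing the resulting measure on its discrete and continuous pieces separately, and finally constructing $\lambda^S$ so that $\mathbb{P}^S \ll \lambda^S$ reduces to inspection of the explicit form in \cref{eqn_corollary_form_probability_measures}.

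First, I would invoke \cref{thrm_isomorphisms_standard_borel_spaces} to obtain a Borel isomorphism $\zeta : E \to I \sqcup \mathbb{S}$ with $I \subseteq \mathbb{R}$ an interval and $\mathbb{S} \subseteq \mathbb{N}$, and set $\mathrm{supp}_{\mathrm{cont}} := \zeta^{-1}(I)$, $\mathrm{supp}_{\mathrm{disc}} := \zeta^{-1}(\mathbb{S})$. Pushing $\mathbb{P}^S$ forward along $\zeta$ produces a probability measure $\mu := \zeta_* \mathbb{P}^S$ on $I \sqcup \mathbb{S}$, which splits additively as $\mu = \mu|_I + \mu|_{\mathbb{S}}$ under restriction to the two Borel pieces.

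Second, I would analyze each component. On $\mathbb{S}$, every Borel set is a countable union of singletons, so $\mu|_{\mathbb{S}}$ is entirely determined by its atoms: writing $p_{x_i} := \mu(\{x_i\})$, one has $\mu|_{\mathbb{S}} = \sum_i p_{x_i} \delta_{x_i}$, and pulling back through $\zeta^{-1}$ recovers the first summand of \cref{eqn_corollary_form_probability_measures}. On $I$, the measure $\mu|_I$ is non-atomic by construction, since all atoms of $\zeta_*\mathbb{P}^S$ are collected into $\mathbb{S}$; under the hypothesis, tacit in the statement, that this non-atomic part is absolutely continuous with respect to Lebesgue measure $m$, \cref{thrm_radon_nikodym} supplies a non-negative density $f_0 := d\mu|_I / dm \in L^1(I)$. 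Setting $m^\zeta := (\zeta^{-1})_* m$, $\delta^\zeta_{x_i} := (\zeta^{-1})_* \delta_{x_i}$, and $f := f_0 \circ \zeta$ transports the decomposition back to $E$ and produces exactly \cref{eqn_corollary_form_probability_measures}; the normalization $\sum_i p_{x_i} + \int f\,dm^\zeta = 1$ follows from $\mathbb{P}^S(E)=1$.

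Third, defining $\lambda^S$ as in \cref{eqn_corollary_form_reference_measures}, I would verify $\mathbb{P}^S \ll \lambda^S$ directly from \cref{def_absolute_continuity}: if $A \in \mathcal{B}(E)$ satisfies $\lambda^S(A)=0$, then $\lambda^S(A \cap \mathrm{supp}_{\mathrm{disc}})=0$ forces $A$ to contain no atom $x_i$ and $m^\zeta(A \cap \mathrm{supp}_{\mathrm{cont}})=0$ makes the integral in \cref{eqn_corollary_form_probability_measures} vanish, so $\mathbb{P}^S(A)=0$. The main obstacle is the implicit appeal to Lebesgue absolute continuity of the non-atomic part: a non-atomic Radon probability on an interval need not admit a Lebesgue density, as singular continuous measures of Cantor type show. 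To make the corollary rigorous in full generality one should either restrict to measures without singular-continuous component (a standing assumption implicit in subsequent density-based arguments of \cref{section_continuation_PID}), or enlarge $\lambda^S$ by adding any singular-continuous part of $\mu|_I$ to the reference measure; either repair preserves the key conclusion $\mathbb{P}^S \ll \lambda^S$ needed downstream.
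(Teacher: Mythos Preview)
Your approach is essentially the same as the paper's: invoke the Borel isomorphism of \cref{thrm_isomorphisms_standard_borel_spaces}, split into the discrete and interval pieces, represent each piece by Diracs and a Lebesgue density respectively, and read off $\mathbb{P}^S \ll \lambda^S$. The paper's proof is considerably terser than yours---it simply asserts that on the interval part one ``takes $\mathbb{P}^S$ to be a weighted Lebesgue measure'' and then appeals to $\sigma$-finiteness and \cref{thrm_radon_nikodym}---so your version is in fact more explicit about the pushforward, the atom enumeration, and the verification of absolute continuity.

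Your identification of the singular-continuous obstruction is correct and is \emph{not} addressed in the paper's proof. A non-atomic Borel probability on an interval need not be absolutely continuous with respect to Lebesgue measure, so the step ``$\mu|_I$ has a Lebesgue density $f$'' is unjustified in full generality; the paper's proof simply glosses over this, while you make the hidden hypothesis explicit and propose the two natural repairs (restrict to measures without singular-continuous part, or absorb that part into $\lambda^S$). Either fix suffices for the downstream use of the corollary, which only needs the existence of some $\sigma$-finite $\lambda^S$ with $\mathbb{P}^S \ll \lambda^S$.
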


\begin{proof}
Then the proof is purely constructive; let again $\zeta$ be the isomorphism whose existence is guaranteed by \cref{thrm_isomorphisms_standard_borel_spaces}. If $\left((E, \tau_E), \mathcal{B}(E) \right)$ is isomorphic to an interval $\left(I, \mathcal{B}(I)\right)$ for $I \subset \mathbb{R}$, we take $\mathbb{P}^S$ to be a weighted Lebesgue measure on the image under $\zeta$ of subsets in $E$; if it is isomorphic to $(\mathbb{S}, \mathcal{B}(\mathbb{S}))$ for a set $\mathbb{S} \subseteq \mathbb{N}$, we take a weighted Dirac measure which has been pushed forward by $\zeta$. If it is isomorphic to a disjoint union $\left(I \cup \mathbb{S}, \mathcal{B}_{du}(I \cup \mathbb{S})\right)$ of both, where $ \mathcal{B}_{du}(I \cup \mathbb{S})$ denotes the Borel $\sigma$-algebra constructed via the disjoint union topology, one takes a linear combination of both acting on the preimages of $I$ and $\mathbb{S}$, respectively. Further, the measures $\mathbb{P}^S$ and $\lambda^S$ is readily $\sigma$-finite and hence the Radon-Nikod\'{y}m derivative of $\mathbb{P}^S$ with respect to $\lambda^S$ exists by \cref{thrm_radon_nikodym}.

\end{proof}

\section{A measure-theoretic generalization for $i^{\sx}_\cap$} \label{section_continuation_PID}

In this section, we extend PID to more general measure-theoretic settings. In particular, we derive a generalized variant of the measure of redundant information, $i_\cap^{\sx}$, introduced in  \cref{section_PID_isx_measure}, explain the setting investigated throughout this work (\cref{section_measure_theoretic_approach_subsection_setting}), and define $i_\cap^{\sx}$ for continuous random variables (\cref{section_measure_theoretic_approach_subsection_definition_sxpid}) and discrete-continuous mixtures. Then, we show that under consideration of finite, discrete spaces this leads back to the original discrete $i^{\sx}_\cap$ as in \cref{eqn_isx_discrete_case}. Furthermore, we prove that $i^{\sx}_\cap$ fulfills a pointwise version of the Williams and Beer axioms and several desirable properties such as invariance under isomorphisms and differentiability in a specific sense.

\subsection{The measure-theoretic setting} \label{section_measure_theoretic_approach_subsection_setting}
Since in this section we strive to expand the framework of PID as presented in \cref{section_PID_isx_measure} to a general measure-theoretic setting, we need to specify and describe the objects for which the theory developed here holds. These requirements themselves, however, are not particularly restrictive, as already standard settings such as the real numbers and finite sets can fulfill them with suitable assumptions.

Let there be a locally compact Radon space $(\Omega, \tau_\Omega)$ and a finite family $\{(E_{S_i}, \tau_{E_{S_i}})\}_{i \in \{1, \ldots, N_S\}} \cup (E_T, \tau_{E_T})$  of locally compact Radon spaces such that for $\Omega$ as well as each element of the family, every compact subset thereof is metrizable. That is, there exist metrics $d_\Omega$ and a family of metrics $\{d_{E_{S_i}}\}_{i \in \{1, \ldots, N_S\}} \cup \{d_{E_T}\}$ generating $\tau_\Omega$,  $\tau_{E_{S_i}} \forall i \in \{1, \ldots, N_S\}$ and $\tau_{E_T}$, respectively. Note that each individual Radon space may be uncountably large. 

Some of the most prominent examples for Radon spaces matching those conditions include $\mathbb{R}^n$ with the natural topology, or any finite or countably infinite set endowed with a metric. 

Associated to $(\Omega, \tau_\Omega)$ and $\{(E_{S_i}, \tau_{E_{S_i}})\}_{i \in \{1, \ldots, N_S\}} \cup \{(E_T, \tau_{E_T})\}$ we consider families of random variables, denoted by $\mathbf{S} = \{S_i\}_{i \in \{1, \ldots, N_S\}}$, called \emph{source random variables} and $T$, called \emph{target random variable} such that the image of each of the random variables is $X(\Omega) =: E_{X}, X \in \mathbf{S} \cup \{T\}$. Then we denote the full mapped space by $E := \bigtimes_{X \in \mathbf{S} \cup \{T\}} E_X$. Finally, denote by $E_{-Y}, Y \in \mathbf{S} \cup \{T\}$ the product of all $E_X$ without $Y$, i.e. $E_{-Y} = \bigtimes_{X \in \mathbf{S} \cup \{T\}\setminus \{Y\}} E_X$.

Note that for finitely many factors, the resulting topology $\tau_E$ on $E$ making it a Radon space is the product topology, and the Borel $\sigma$-algebra generated from this topology is the same as the product of the sigma algebras, i.e. $\mathcal{B}(E) = \bigtimes_i \mathcal{B}(E_{S_i}) \times \mathcal{B}(E_T)$. Then it is known that the product of these families is a Radon space too \cite{product_radon_spaces} and hence standard. 

Understanding $((\Omega, \tau_{\Omega}), \mathcal{B}(\Omega))$ together with a complete Borel probability measure $\mathbb{P}$ as a Radon Borel probability space then accounts for any sort of setting realistically imaginable, as the requirements above allow for purely discrete sample spaces, purely continuous sample spaces, as well as mixtures where either of the individual sample spaces may be discrete or continuous, or have itself discrete of continuous components. An example of the latter one being a random variable mapping a continuous space to $\mathbb{R}$, taking the value $0$ with probability $p$, and being uniformly distributed over $[1,2]$ with probability $1-p$. Hence the spaces we consider are precisely the objects from \cref{def_radon_borel_probability_spaces} and therefore the theorems from \cref{section_mathematical_background} all hold for these spaces.

\subsection{Definition of $i^{\sx}_\cap$ for continuous random variables} \label{section_measure_theoretic_approach_subsection_definition_sxpid}
To write down the shared exclusions' redundant information from \cref{eqn_isx_discrete_case}, we search for analogies with the discrete setting. In order to define a possible candidate of generalization we need a regular conditional probability as well as the aforementioned auxiliary random variable to exist, such that we can sensibly state the expressions. Thus we define the auxiliary random variable that has been mentioned in \cref{eqn_isx_discrete_case_auxiliary_random_variable} formally. After that, we use an analogy with the full local mutual information to compose the quantities treated so far in a manner that resembles \cref{eqn_isx_discrete_case_auxiliary_random_variable}. Finally, we arrive at a generalization of the local shared exclusions' redundant information.

In order to have the expressions stated in a way that includes the statistical interplay between the source and target random variables, as formalized by \cref{eqn_isx_discrete_case_auxiliary_random_variable}, we need to write down the corresponding expression in terms of the events on $\Omega$. 

\begin{definition}[Injection] \label{def_injections}
Consider the setting as in \cref{section_measure_theoretic_approach_subsection_setting}, including the family of random variables $\mathbf{S}$, $T$ and the corresponding Radon Borel probability spaces $\left( (\Omega, \tau_\Omega), \mathcal{B}(\Omega)\right)$ and $\left( (E, \tau_E), \mathcal{B}(E)\right)$. Let $X \in \mathbf{S} \cup \{T\}$ and recall that $E$ is the product of all the  $E_X$. Then we introduce the shorthand notation for the preimage under $X$ as
\begin{align*}
    \Pi_{X}: \mathcal{B}(E_X) \to \mathcal{B}(\Omega), \; B \mapsto \Pi_{X}(B) = \left\{\omega \in \Omega : X(\omega) \in B \right\} \; .
\end{align*}

\end{definition}
As the logical structure of the discrete redundant information in \cite{makkeh2021isx} is complex in the sense of utilizing logical disjunctions, the technique of addition for different constraints connected via a logical OR cannot generally be applied in a straightforward manner for regular conditional probability measures as introduced in \cref{thrm_regular_conditional_probability}. This technique would amount to evaluation of a function at multiple evaluation points in the same instance. To avoid this, we introduce an auxiliary random variable:

\begin{definition}[Auxiliary Indicator Variable] \label{def_auxiliary_random_variable}
Suppose we have a fixed $s \in E_{-T}$, then $s$ has the form $s = (s_k)_{k \in \left\{1, \ldots, N_S\right\}}$. 

Denote by $R_s$ the event 
\begin{align*}
    R_s := \bigcup\limits_{i=1}^{N_S} \Pi_{S_i} \left(\{s_i\}\right) \subset \mathcal{B}(\Omega) \; .
\end{align*}

Similarly we define $R_{\alpha, s} $ for a collection of indices of random variables $\alpha = \{\mathbf{a}_1, \ldots, \mathbf{a}_m\} \in \mathcal{P}(\mathcal{P}(\{1, \ldots, N_S\}))$ where $\mathbf{a}_k$ denotes a set of indices by 
\begin{align*}
R_{\alpha,s} := \bigcup\limits_{k\in[m]} \bigcap\limits_{j \in \mathbf{a}_k} \Pi_{S_j}(\{s_j\}) \; .
\end{align*}

Then we introduce the short notation $\mathds{1}_s = \mathds{1}_{R_s}$ and $\mathds{1}_{\alpha, s} = \mathds{1}_{R_{\alpha, s}}$ for the indicator random variables with respect to $R_s$ and $R_{\alpha, s}$.

\end{definition}
 
Due to the restrictions from the setting present, there exists a regular conditional probability with respect to a disintegration from $\Omega$ to $E$ via application of $\mathds{1}_s$ as in \cref{thrm_disintegration}. 

\begin{definition}[Marginal Measures]
Let again $X \in \mathbf{S} \cup \{T\}$ and fix an $s \in E$. Note that with the vector of random variables $V := (S_1, \ldots, S_{N_S}, T)$ and \cref{def_injections} we can write the marginal measure with respect to $X$ as 
\begin{align*}
    \mathbb{P}^X : \mathcal{B}\left(E_X\right) \to [0,1], \; B \mapsto \mathbb{P}^X(B) = \mathbb{P} \circ \Pi_X(B) \; .
\end{align*}

Additionally, we call $\nu_s$ the regular conditional probability measure on $\mathcal{B}\left(\Omega\right)$, conditioned on $R_s$. From this, we obtain the image measure $\nu_s^V = \nu_s \circ V^{-1}$ on $\mathcal{B}(E)$.
Further, define $\nu^X_{s}(B) := \nu_{s} \circ \Pi^{-1}_X(B), B \in \mathcal{B}(E_X)$ to be the marginal measure on $E_X$ with respect to $X$.

In the exact same way we define the quantities $\nu_{\alpha, s}, \nu^V_{\alpha, s}$ and $\nu^X_{\alpha, s}$ as regular conditional probabilities conditioned on $R_{\alpha, s}$ (or rather $\mathds{1}_{\alpha, s} = 1$).

Note that formally, by conditioning onto the $\sigma$-algebra generated by $R_s$, for an $A \in \mathcal{B}(\Omega)$ we would generate a random measure $\nu_s(\omega, A)$ as in \cref{eqn_regular_conditional_probability_as_rn_derivative}. Since we condition on $R_{s}$, however, this measure is only non-zero when $\omega \in R_s$. So technically we should be writing $\frac{d\eta^{\mathds{1}_s}_A}{d\mathbb{P}^{\mathds{1}_s}}(1) = \nu_s(1, A)$. Nonetheless, to improve the readability, we omit the first argument and proceed only writing $\nu_s(A)$, implicitly carrying the 1. We do the same for all regular conditional probability measures defined here.
\end{definition}

Then $\nu_{s}$ enables us to measure the sets in $\Omega$  under the condition, that either of the realizations $\{s_i\} =: s \in E_{-T}$ are taken under the random variables in the family $\textbf{S}$.

Further note that $\nu_{\alpha, s} \ll \mathbb{P}$ for all $\alpha \in \mathcal{P}(\mathcal{P}(\{1, \ldots, N_S\})), s \in E$. This can be seen when recalling the form of $\nu_{\alpha, s}$ from \cref{eqn_regular_conditional_probability_as_rn_derivative}. Explicitly, let $N \in \mathcal{B}(\Omega)$ be a $\mathbb{P}$-nullset. Then it follows that $\eta_N(A) = \mathbb{P}(N \cap A) = 0$ for all $A \in \mathcal{B}(\Omega)$ and consequently $ \nu_{\alpha, s}(N) = \frac{d\eta^{\mathds{1}_{\alpha, s}}_N}{d\mathbb{P}^{\mathds{1}_{\alpha, s}}}(1) = 0$. Note that here, as explained above, we again omitted the argument 1.
Thus all $\mathbb{P}$-nullsets are also $\nu_{\alpha, s}$-nullsets, with the same readily holding for the image and marginal measures, i.e. $\nu_{\alpha, s}^V \ll \mathbb{P}^V$ and $\nu_{\alpha, s}^X \ll \mathbb{P}^X$ for all $X \in \mathbf{S} \cup \{T\}$. 

In addition, recall from \cref{corollary_isomorphisms_standard_borel_spaces} that, as probability densities are defined to be Radon-Nikod\'{y}m derivative of probability measures with respect to some reference measure, we can generate a suitable measure $\lambda_T$ with respect to which $\mathbb{P}^T$ and the $\nu^T_{s}$ are absolutely continuous. When $\mathbb{P}^T$ has the form from \cref{eqn_corollary_form_probability_measures}, the corresponding $\lambda^T$ has a form as in \cref{eqn_corollary_form_reference_measures}.

In order to define a candidate for an information measure of the redundant part of a local mutual information, we will use the notation introduced above to find the corresponding form of the full local mutual information. Then we will, based on the similarities between the rewritten form of the local mutual information and the discrete local mutual information, propose a new measure-theoretically well-defined local redundant information measure.

Recall that for any $s_i\in E_{S_i},$ given that $S_i$ is discrete, the local mutual information takes the form 
\begin{align} \label{eqn_local_mutual_information_conditional_density}
    i(t:s_i) = \log \left[ \frac{p(t\mid s_i)}{p(t)} \right] \; .
\end{align}
This $i(t:s_i)$ can be generalized for any $S_i$ (not necessarily discrete) as follows. Simplify, for the moment, the setting build up in \cref{section_measure_theoretic_approach_subsection_setting} back to the case of a single source random variable called $S_i$, and the target random variable $T$. Then the regular conditional probability measure on $E_T$ conditioned on $R_{s_i}$, denoted by $\nu^T_{s_i}$, formalizes the measure from which the conditional density $p(t|s_i)$ in \cref{eqn_local_mutual_information_conditional_density} is derived. Then we write 
\begin{align}
    i(t:s_i) &= \log \left[ \frac{ \frac{d\nu^T_{s_i}}{d\lambda_T}(t) }{  \frac{d\mathbb{P}^T}{ d\lambda_T} (t) } \right] \label{eqn_local_MI_differential_form} \\
    &= \log \left[ \frac{d\nu^T_{s_i}}{d\mathbb{P}^T} (t) \right] \; . \label{eqn_local_mutual_information_rewritten_realizations}
\end{align}
Similarly, back in the full setting from \cref{section_measure_theoretic_approach_subsection_setting}, any realization $(\mathbf{s}, t)\in \bigtimes_{j\in\mathbf{a}}E_{S_j}\times E_{T},$ and the corresponding regular conditional probability $\nu^T_{\alpha, s}$, where $\alpha = \{\mathbf{a}\}$, give rise to the local mutual information the realization of the joint random variable $\{S_j\}_{j \in \mathbf{a}}$ has about the realization of the target variables $T$ as
\begin{align}
    i(t:\mathbf{a}) = \log \left[ \frac{d\nu^T_{\alpha, s}}{d\mathbb{P}^T}\left(t\right) \right] \; . \label{eqn_local_mutual_information_rewritten_collections}
\end{align}
From this form for a single collection of indices we can directly derive an expression for a number of mutually inclusive disjunctions of conditions of which either can be true, as in \cref{eqn_isx_discrete_case_auxiliary_random_variable}.

Then, to generate a measure for the local redundant information, in direct analogy with \cref{eqn_local_mutual_information_rewritten_realizations}, we define: 

\begin{definition} \label{def_continuous_PID}
Let the notation be as above. Based on the idea of the discrete measure of shared exclusions of realizations
\begin{equation*}
    i^{\sx}_{\cap}(t : \{1\};\ldots;\{n\}) := \log \left[ \frac{p\left( \{T = t\}\mid\bigvee_i\{S_i = s_i\} \right)}{p \left( \{T = t\} \right)} \right] \; ,
\end{equation*}
where $\bigvee_i$ denotes a logical or, we define a measure-theoretic information-measure for redundant information, 

\begin{align}
    i^{\sx}_\cap(t : \{1\};\ldots;\{n\}) = \log \left[ \frac{d\nu^T_{s}}{d\mathbb{P}^T} \left( t \right) \right] \; . \label{eqn_continuous_PID_realizations}
\end{align}

Analougusly we define the shared exclusions' redundant information under consideration of collections $\alpha = \{\mathbf{a}_1,\ldots,\mathbf{a}_m\},$ using $\nu^T_{\alpha, s}$ as the image probability measure, generated from disintegration of $\Omega$ with respect to $\mathds{1}_{\alpha, s} = 1$ from \cref{def_auxiliary_random_variable}. This generalizes the local redundant information \cref{eqn_continuous_PID_realizations} as follows:
\begin{align}
    i^{\sx}_\cap(t: \alpha) = i^{\sx}_\cap(t : \mathbf{a}_1;\ldots;\mathbf{a}_m) := \log \left[ \frac{d\nu^T_{\alpha, s}}{d\mathbb{P}^T} \left( t \right) \right] \; . \label{eqn_continuous_PID_collections}
\end{align}
Moreover, in the case that $\nu^T_{\alpha, s}$ is not absolutely continuous with respect to $\mathbb{P}^T$, we define that $i^{\sx}_\cap(t: \alpha) = \infty$, and for those $t \in E_T$ for which $\frac{d\nu^T_{\alpha, s}}{d\mathbb{P}^T} \left( t \right) = 0$, we define $i^{\sx}_\cap(t: \alpha) = -\infty$.  

Analogous to $\frac{p(t|s)}{p(t)} = \frac{p(s|t)}{p(s)}$ by Bayes' law, an equivalent form of $i^{\sx}_\cap(t:\alpha)$ is

\begin{align}
    i^{\sx}_\cap(t:\alpha) = \log \left[ \frac{d\nu^T_{\alpha, s}}{d\mathbb{P}^T}(t) \right] = 
    \log \left[ \frac{d\nu^{\mathds{1}_{\alpha, s}}_{t}}{d\mathbb{P}^{\mathds{1}_{\alpha, s}}}(1) \right] \; . \label{eqn_monotonicity_bayes_reformulation}
\end{align}
In the latter expression we use the following: 

As $\frac{d\nu^T_{\alpha, s}}{d\lambda_T}(t) = \frac{d\nu^{\mathds{1}_{\alpha, s}}_{t}}{d\mathbb{P}^{\mathds{1}_{\alpha, s}}}(1) \frac{d\mathbb{P}^T}{d\lambda_T}(t)$, we can divide both expressions by $\frac{d\mathbb{P}^T}{d\lambda_T}(t)$ to obtain $\frac{d\nu^T_{\alpha, s}}{d\mathbb{P}^T}(t) = \frac{d\nu^{\mathds{1}_{\alpha, s}}_{t}}{d\mathbb{P}^{\mathds{1}_{\alpha, s}}}(1)$.
Here $\nu^{\mathds{1}_{\alpha, s}}_{t}$ is a disintegrated probability measure concentrated, hence conditioned, on $t$, then pushed forward by $\mathds{1}_{\alpha, s}$. This idea is similar to a remark above theorem 2 in \cite{disintegration_theorem_chang_pollard}. The measure $\mathbb{P}^{\mathds{1}_{\alpha, s}}$ is simply the original measure, pushed forward by $\mathds{1}_{\alpha, s}$. 
\end{definition}

\subsection{Necessary and desired properties} \label{section_continuation_PID_subsection_properties}

From the form that has been derived above, one can now conclude a number of properties, which also distinguish the original measure from the discrete case. Some of the properties mentioned are necessary for each conceptually sensible PID information-measure, and some are simply beneficial for applications.

\subsubsection{Recovering the discrete definition} \label{subsubsection_recovering_discrete_case}
A preliminary sanity check that the decomposition presented actually represents a generalization of the discrete original approach, we verify that the continuous shared exclusions' redundant information simplifies to the discrete expression if the probability measure $\mathbb{P}$ is a discrete measure, with the underlying space $\Omega$ being of finite cardinality.

Explicitly, take $\mathbb{P} = \sum_{\omega \in \Omega} p_\omega \delta_{\{\omega\}}$ where $\delta_{\{\omega\}}(A) := \begin{cases} 1 & \text{ if }\omega \in A \\ 0 & \text{ otherwise} \end{cases}$ is the Dirac measure and $\sum_{\omega \in \Omega} p_\omega= 1$. 
Understanding $\Omega$ as a topological space together with the discrete topology, just as done with $\{0,1\} = \mathrm{im}(\mathds{1}_{\alpha, s})$ above, leads to the existence of a regular conditional probability on the corresponding Borel $\sigma$-algebra, conditioned on $R_{\alpha, s}$. Call this regular conditional probability measure $\mathbb{D}$ when conditioned on $\mathds{1}_{\alpha, s}=1$. The partial density stemming from this regular conditional probability, i.e. $\mathbb{D}^T := \mathbb{D}^V \circ \Pi_T$ is then a measure which measures sets of $E_T$, conditioned on $R_{\alpha, s}$. The discrete nature of this measure then lets us conclude that $\mathbb{D}^T$ evaluates to 
\begin{align*}
    \mathbb{D}^T (A) = \sum_{\omega \in R_{\alpha, s} \cap T^{-1}(A)} p_\omega \; .
\end{align*}
From this measure, the density can be calculated as Radon-Nikod\'{y}m derivative $\frac{d\mathbb{D}^T}{d\lambda_T}(t)$.
This latter expression exactly equals, in the more common notation of \cite{makkeh2021isx}, $p\left(t \mid \mathrm{im}(\mathds{1}_{\alpha, s}) = \{1\}\right)$.

The usual probability density, formed as $p(t) = \frac{d\mathbb{P}^T}{d\lambda_T}(t)$ remains exactly the same, even notationally.

Thus we arrive at the very same expression for the discrete case, as was presented in \cite{makkeh2021isx}, verifying the generalizing nature of the measure-theoretic redundant information measure from \cref{def_continuous_PID}.

\subsubsection{Self-redundancy, monotonicity, and symmetry under permutations} \label{subsubsection_self_redundancy_monotonicity_symmetry}
Recall from \cref{sec:pid-redundant-info} that any redundant measure should fulfill three properties, namely, self-redundancy, invariance under superset removal or addition, and symmetry  so that such a redundancy measure respects the parthood nature of the PID quantities. We show here that $i^{\sx}_\cap$ fulfils these considerations for any point in $\Omega$ which are then trivially fulfilled for the average $I^{\sx}_\cap.$ We first state the pointwise version of these considerations. 

\begin{definition} \label{def_ppid_axioms}
Consider the shared exclusions measure for redundant information defined in \cref{def_continuous_PID}. Then, for any $(s,t)\in\Omega$, $i^{\sx}_\cap$ satisfies:
\begin{itemize}
    \item \textbf{Self-redundancy} if $i^{\sx}_\cap(t:\mathbf{a}) = i(t:\mathbf{a})$ \; ,
    \item \textbf{invariance under superset removal or addition} if, whenever $\exists j \in [m]$ such that $\mathbf{a}_{j} \subseteq \mathbf{a}_{m+1}$, then $i^{\sx}_\cap(t:\mathbf{a}_1;\ldots;\mathbf{a}_m)= i^{\sx}_\cap(t:\mathbf{a}_1;\ldots;\mathbf{a}_m;\mathbf{a}_{m+1}),$ 
    \item and \textbf{symmetry} if $i^{\sx}_\cap(t:\mathbf{a}_1;\ldots;\mathbf{a}_m) = i^{\sx}_\cap(t:\mathbf{a}_{\kappa(1)};\ldots;\mathbf{a}_{\kappa(m)})$ for all permutations $\kappa$.
\end{itemize}

\end{definition}
In the following theorem we show that $i^{\sx}_\cap(t:\alpha)$ satisfies these considerations. 
\begin{theorem} \label{thrm_local_pid_axioms}
    $i^{\sx}_\cap(t:\alpha)$ satisfies \textbf{self-Redundancy}, \textbf{invariance under superset removal or addition}, and \textbf{symmetry}.
\end{theorem}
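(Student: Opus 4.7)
The plan is to reduce each of the three axioms to a corresponding invariance of the event $R_{\alpha,s}$ from \cref{def_auxiliary_random_variable}. Since $i^{\sx}_\cap(t:\alpha)$ is defined in \cref{eqn_continuous_PID_collections} as $\log\bigl[\tfrac{d\nu^T_{\alpha,s}}{d\mathbb{P}^T}(t)\bigr]$ and $\nu^T_{\alpha,s}$ is the regular conditional probability obtained by conditioning on $R_{\alpha,s}$ (equivalently, on $\mathds{1}_{\alpha,s}=1$), equality of the underlying events forces equality of the regular conditional probability, and hence of $i^{\sx}_\cap(t:\alpha)$.

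Self-redundancy would then be essentially by construction: for $\alpha=\{\mathbf{a}\}$ the union in $R_{\alpha,s}$ has a single term $\bigcap_{j\in\mathbf{a}}\Pi_{S_j}(\{s_j\})$, encoding precisely the event $\{S_j=s_j:j\in\mathbf{a}\}$. The paper already computes the local mutual information in exactly this form in \cref{eqn_local_mutual_information_rewritten_collections}, so $i^{\sx}_\cap(t:\mathbf{a})=i(t:\mathbf{a})$ follows immediately.

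For symmetry I would observe that $R_{\alpha,s}=\bigcup_{k=1}^{m}\bigcap_{j\in\mathbf{a}_k}\Pi_{S_j}(\{s_j\})$ is a finite union indexed by the set $\{\mathbf{a}_1,\ldots,\mathbf{a}_m\}$, which is invariant under any reordering $\kappa$; therefore $R_{\alpha,s}=R_{\kappa\alpha,s}$, and the identical regular conditional probabilities yield identical values of $i^{\sx}_\cap$. For invariance under superset addition, suppose $\mathbf{a}_j\subseteq\mathbf{a}_{m+1}$ for some $j\le m$; since enlarging the index set of an intersection shrinks the set, $\bigcap_{i\in\mathbf{a}_{m+1}}\Pi_{S_i}(\{s_i\})\subseteq\bigcap_{i\in\mathbf{a}_j}\Pi_{S_i}(\{s_i\})$, so appending the new intersection to the existing union does not enlarge it: $R_{\alpha\cup\{\mathbf{a}_{m+1}\},s}=R_{\alpha,s}$, whence the equality of $i^{\sx}_\cap$ on the two antichains.

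The one point requiring care, though not a serious obstacle, is that $\nu^T_{\alpha,s}$ and its Radon--Nikod\'{y}m derivative with respect to $\mathbb{P}^T$ are only $\mathbb{P}^T$-almost everywhere unique. For pointwise equalities at a fixed $t\in E_T$, one must select consistent versions of $\nu^T_{\cdot,s}$ across the antichains being compared. Because the underlying event $R_{\alpha,s}$ is literally the same set in each of the three cases, one may simply take the same version of the disintegration supplied by \cref{thrm_disintegration}, and the asserted identities then hold pointwise on all of $E_T$. The averaged statements for $I^{\sx}_\cap$ follow at once by integrating against $\mathbb{P}^T$.
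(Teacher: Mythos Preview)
Your proposal is correct and follows essentially the same approach as the paper: both arguments reduce each axiom to the observation that the underlying event $R_{\alpha,s}$ (equivalently, the indicator $\mathds{1}_{\alpha,s}$) is unchanged under the relevant operation, from which equality of the regular conditional probability and hence of $i^{\sx}_\cap$ follows. Your treatment is slightly more careful in making explicit the containment $\bigcap_{i\in\mathbf{a}_{m+1}}\Pi_{S_i}(\{s_i\})\subseteq\bigcap_{i\in\mathbf{a}_j}\Pi_{S_i}(\{s_i\})$ and in flagging the almost-everywhere uniqueness issue, but the core argument is identical to the paper's.
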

\begin{proof}
To show that $i^{\sx}_\cap$ satisfies \textit{self-redundancy}, comparing \cref{eqn_local_mutual_information_rewritten_realizations} and \cref{eqn_continuous_PID_realizations}, or \cref{eqn_local_mutual_information_rewritten_collections} and \cref{eqn_continuous_PID_collections}, note
\begin{align*}
    i^{\sx}_\cap(t:s_i) &= \log \left[ \frac{d\nu^T_{s_i}}{d\mathbb{P}^{T}} \left( t \right) \right] = i(t:s_i) \quad \text{as well as } \\
    i^{\sx}_\cap(t:\alpha) &= \log \left[ \frac{d\nu^T_{\alpha, s}}{d\mathbb{P}^{T}} \left(t\right) \right] = i(t:(s_i)_{i\in\mathbf{a}}) \; ,
\end{align*}
where $\alpha = \{\mathbf{a}\}$. Therefore, the redundant shared information for single realizations or collections simplifies to the local mutual information.

To show that $i^{\sx}_\cap$ satisfies \textit{invariance under superset removal or addition}, consider 

$R_{\alpha \cup \textbf{a}_{m+1}, s} = R_{\alpha, s}$ since $\textbf{a}_{m+1} \supset \textbf{a}_j$ for some $\textbf{a}_{j}\in\alpha.$ Thus
\begin{align*}
    i^{\sx}_\cap(t:\alpha; \textbf{a}_{m+1})    &=\log \left[ \frac{d\nu^{\mathds{1}_{\alpha \cup \textbf{a}_{m+1}, s}}_{t}}{d\mathbb{P}^{\mathds{1}_{\alpha \cup \textbf{a}_{m+1}, s}}}(1) \right] = \log \left[ \frac{d\nu^{\mathds{1}_{\alpha, s}}_{t}}{d\mathbb{P}^{\mathds{1}_{\alpha, s}}}(1)\right] = i^{\sx}_\cap(t:\alpha) \; .
\end{align*}
Finally, to show that $i^{\sx}_\cap$ satisfies \textit{symmetry}, consider that the independence of the order of collections can be readily seen, as unions and intersections are individually commutative. Therefore, the random variable $\mathds{1}_{\kappa(\alpha), s}$ equals $\mathds{1}_{\alpha, s}$ pointwise\footnote{$\kappa(\alpha) := \{\mathbf{a}_{\kappa(1)}, \ldots, \mathbf{a}_{\kappa(m)}\}$}. Equivalently, $R_{\kappa(\alpha), s} = R_{\alpha, s}$. Thus one can infer that there must hold $i^{\sx}_\cap(t:\kappa(\alpha)) =  i^{\sx}_\cap(t:\alpha)$ for all permutations $\kappa$.
\end{proof}

An additional property was proposed by Williams and Beer as a requirement for a redundancy measure. This property is called \textit{monotonicity} and reads as follows: a redundancy measure $I_\cap$ is monotonically decreasing under the addition of $\mathbf{a}_{m+1}$ that is not a superset of any $\mathbf{a}_j\in\alpha.$ 

The conceptual reasoning behind monotonicity is that when the redundant information of the collections in $\alpha$ is nonnegative then adding any additional $\mathbf{a}_{m+1}$ can only shrink this redundancy. However, pointwise measures such as $i^{\sx}_\cap$ can take negative values since locally sources can misinform about the target. For the interpretability of such negativity the pointwise measures should be the difference of two nonnegative parts: the informative and misinformative part. Then, when the measure takes a negative value, it means that its misinformative part over took its informative part and vice-versa. Therefore the monotonicity reasoning conceptually fails on the pointwise measure itself since nothing forbids the additional $\mathbf{a}_{m+1}$ to yield a decrement in the misinformative that is less than the decrement of the informative part and thus leads to an increase in the redundancy.  Nevertheless, monotonicity should still hold for each of the informative and misinformative parts.

To this end, consider that\footnote{The notation $\hat{\cdot}$ is introduced due to the following reasons. Applying the self-redundancy property $\hat{i}^{\sx\pm}_{\cap}$ reduces to differential entropy that (i) lacks the invariant property and (ii)can take on negative values. To remedy this, we need to introduce $i^{\sx\pm}_{\cap} = \hat{i}^{\sx\pm}_{\cap} + F$ where $i^{\sx\pm}_\cap$ are both non-negative functions and $F$ is chosen appropriately.} $i^{\sx}_\cap = \hat{i}^{\sx+}_{\cap} - \hat{i}^{\sx-}_{\cap}$ by decomposing the Radon-Nikod\'{y}m derivative \cref{eqn_monotonicity_bayes_reformulation}, with respect to a reference measure $\lambda$ via

\begin{align}
    i^{\sx}_\cap(t:\alpha) &= \log \left[ \frac{d\nu^{\mathds{1}_{\alpha, s}}_{t}}{d\mathbb{P}^{\mathds{1}_{\alpha, s}}}(1) \right] = \log \left[ \frac{\frac{d\nu^{\mathds{1}_{\alpha, s}}_{t}}{d\lambda^{\mathds{1}_{\alpha, s}}}(1)}{\frac{d\mathbb{P}^{\mathds{1}_{\alpha, s}}}{d\lambda^{\mathds{1}_{\alpha, s}}}(1)} \right] = - \underbrace{ \log \left[\frac{1}{\frac{d\nu^{\mathds{1}_{\alpha, s}}_{t}}{d\lambda^{\mathds{1}_{\alpha, s}}}(1)} \right]}_{\text{$:= \hat{i}^{\sx-}_{\cap}$}} + \underbrace{\log \left[ \frac{1}{\frac{d\mathbb{P}^{\mathds{1}_{\alpha, s}}}{d\lambda^{\mathds{1}_{\alpha, s}}}(1)} \right]}_{\text{$:= \hat{i}^{\sx+}_{\cap}$}} \; . \label{eqn_isxpid_decomposition}
\end{align}

Note that this decomposition exists precisely when $ \frac{d\mathbb{P}^{\mathds{1}_{\alpha, s}}}{d\lambda^{\mathds{1}_{\alpha, s}}}(1) \neq 0 \neq \frac{d\nu^{\mathds{1}_{\alpha, s}}_{t}}{d\lambda^{\mathds{1}_{\alpha, s}}}(1)$. This, however, was exactly the condition for $i^{\sx}_\cap$ as in \cref{eqn_monotonicity_bayes_reformulation} to exist in the first place.

Therefore, we prove the monotonicity for both these parts individually:
\begin{proposition} \label{proposition_monotonicity}
Recall the decomposition as in \cref{eqn_isxpid_decomposition}. Then both the $\hat{i}^{\sx\pm}_{\cap}$ are monotone under extension of the collections considered as in \cref{def_ppid_axioms}.
\end{proposition}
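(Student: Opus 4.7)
The plan is to reduce both $\hat{i}^{\sx+}_{\cap}$ and $\hat{i}^{\sx-}_{\cap}$ to logarithms of reciprocals of probabilities of the single event $R_{\alpha,s}$, and then invoke the usual monotonicity of probability measures under set inclusion. First, I would observe, directly from \cref{def_auxiliary_random_variable}, that extending $\alpha$ to $\alpha' := \alpha \cup \{\mathbf{a}_{m+1}\}$ can only enlarge the auxiliary event: $R_{\alpha, s} \subseteq R_{\alpha', s}$, because the defining outer union merely gains one additional intersection. Consequently $\mathbb{P}(R_{\alpha,s}) \le \mathbb{P}(R_{\alpha',s})$, and analogously $\nu_t(R_{\alpha,s}) \le \nu_t(R_{\alpha',s})$, where $\nu_t$ is the regular conditional probability on $\Omega$ provided by \cref{thrm_disintegration} applied to $T$.

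Second, I would exploit that the indicator $\mathds{1}_{\alpha,s}$ takes values only in the two-point space $\{0,1\}$. By \cref{corollary_isomorphisms_standard_borel_spaces}, the canonical reference measure $\lambda^{\mathds{1}_{\alpha,s}}$ on $\{0,1\}$ can be taken as the counting measure (a sum of Dirac masses at $0$ and $1$), so that both Radon-Nikod\'ym derivatives appearing in \cref{eqn_isxpid_decomposition} collapse to point masses: $\tfrac{d\mathbb{P}^{\mathds{1}_{\alpha,s}}}{d\lambda^{\mathds{1}_{\alpha,s}}}(1) = \mathbb{P}^{\mathds{1}_{\alpha,s}}(\{1\}) = \mathbb{P}(R_{\alpha,s})$ and, by the same push-forward identity, $\tfrac{d\nu^{\mathds{1}_{\alpha,s}}_{t}}{d\lambda^{\mathds{1}_{\alpha,s}}}(1) = \nu_t(R_{\alpha,s})$. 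Substituting these into the bracketed definitions in \cref{eqn_isxpid_decomposition} yields the clean representations $\hat{i}^{\sx+}_{\cap}(t:\alpha) = \log\bigl[1/\mathbb{P}(R_{\alpha,s})\bigr]$ and $\hat{i}^{\sx-}_{\cap}(t:\alpha) = \log\bigl[1/\nu_t(R_{\alpha,s})\bigr]$.

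Combining the two steps, monotonicity of $x \mapsto \log(1/x)$ together with the two inequalities $\mathbb{P}(R_{\alpha,s}) \le \mathbb{P}(R_{\alpha',s})$ and $\nu_t(R_{\alpha,s}) \le \nu_t(R_{\alpha',s})$ immediately gives $\hat{i}^{\sx+}_{\cap}(t:\alpha') \le \hat{i}^{\sx+}_{\cap}(t:\alpha)$ and $\hat{i}^{\sx-}_{\cap}(t:\alpha') \le \hat{i}^{\sx-}_{\cap}(t:\alpha)$, which is exactly the claimed monotonicity under extension of collections. The edge cases where any of these quantities fails to exist as a finite real number are precisely those excluded when the decomposition \cref{eqn_isxpid_decomposition} is set up, so they need no separate discussion.

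The main obstacle is notational rather than analytic: one has to be careful not to conflate the push-forward probability measure $\nu^{\mathds{1}_{\alpha,s}}_{t}$ on $\{0,1\}$ with the random measure $\nu_{(\cdot)}(A)$ on $\Omega$, and to verify that evaluation at the suppressed argument $1$ in the paper's shorthand genuinely produces $\nu_t(R_{\alpha,s})$. Once this identification is made, the proposition reduces to elementary measure monotonicity, and the discrete intuition that adding further disjuncts to a logical \emph{or} can only raise its probability is faithfully preserved in the measure-theoretic setting.
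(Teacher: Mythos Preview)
Your approach is essentially the paper's: both arguments rest on the inclusion $R_{\alpha,s}\subseteq R_{\alpha\cup\{\mathbf{a}_{m+1}\},s}$ and then use monotonicity of the measures $\mathbb{P}$ and $\nu_t$ under set inclusion to bound the relevant Radon--Nikod\'ym ratio by~$1$. Your extra step of identifying $\lambda^{\mathds{1}_{\alpha,s}}$ with the counting measure on $\{0,1\}$ so that the derivatives collapse to $\mathbb{P}(R_{\alpha,s})$ and $\nu_t(R_{\alpha,s})$ is a helpful simplification, but it does not change the substance; the paper instead manipulates the quotient $\tfrac{d\nu^{\mathds{1}_{\alpha,s}}_t}{d\nu^{\mathds{1}_{\alpha\cup\mathbf{a}_{m+1},s}}_t}(1)$ directly and invokes $\sigma$-additivity for the same conclusion.

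One point to flag: you conclude $\hat{i}^{\sx\pm}_\cap(t:\alpha')\le\hat{i}^{\sx\pm}_\cap(t:\alpha)$, i.e.\ the informative and misinformative parts are \emph{decreasing} under extension. This is the direction consistent with the Williams--Beer description in the text preceding the proposition and with the discrete $i^{\sx\pm}_\cap$ of \cref{section_PID_isx_measure}. The paper's proof, however, states the monotonicity condition as the reverse inequality \cref{eqn_monotonicity_condition_isxpm}, and in the displayed difference for $\hat{i}^{\sx+}_\cap$ writes $\nu_t$ where $\mathbb{P}$ should appear (the roles of $+$ and $-$ are swapped). These look like sign and label typos in the paper rather than a flaw in your reasoning; your computation gives the direction that matches the surrounding discussion.
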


\begin{proof}

First recall that the monotonicity from \cref{def_ppid_axioms} is equivalent to the statement that
\begin{align}
    \hat{i}^{\sx\pm}_\cap(t:\alpha) \leq \hat{i}^{\sx\pm}_\cap(t:\alpha; \textbf{a}_{m+1}) \label{eqn_monotonicity_condition_isxpm}
\end{align}
for $\textbf{a}_{m+1}$ being another collection under consideration. 

If $\textbf{a}_{m+1}$ is a superset of any of the $\textbf{a}_{j} \in \alpha$, then there is nothing to show due to \cref{thrm_local_pid_axioms}. If there is no $\textbf{a}_j \in \alpha$ such that $\textbf{a}_j \subset \textbf{a}_{m+1}$, then we proceed as follows.

First treating $\hat{i}^{\sx+}_\cap$, \cref{eqn_monotonicity_condition_isxpm} is equivalent to 
\begin{align*}
    0 \geq \hat{i}^{\sx+}_\cap(t:\alpha) - \hat{i}^{\sx+}_\cap(t:\alpha; \textbf{a}_{m+1}) = \log \left[ \frac{d\nu^{\mathds{1}_{\alpha, s}}_{t}}{d\nu^{\mathds{1}_{\alpha \cup \textbf{a}_{m+1}, s}}_{t}}(1) \right] \; .
\end{align*}
The latter Radon-Nikod\'{y}m derivative exists, since $R_{\alpha, s} \subseteq R_{\alpha \cup \textbf{a}_{m+1}, s}$. Hence the random variables support is a larger one for the measure found in the denominator.
We need to $ 1 \geq \frac{d\nu^{\mathds{1}_{\alpha, s}}_{t}}{d\nu^{\mathds{1}_{\alpha \cup \textbf{a}_{m+1}, s}}_{t}}(1)$ to hold, such that the negative logarithm is above zero.
Due to $\sigma$-additivity of $\nu_t$ and the inclusion of $R_{\alpha, s}$ in $R_{\alpha \cup \textbf{a}_{m+1}, s}$, this necessary leads to $ 1 \geq \frac{d\nu^{\mathds{1}_{\alpha, s}}_{t}}{d\nu^{\mathds{1}_{\alpha \cup \textbf{a}_{m+1}, s}}_{t}}(1)$ as claimed. Hence $i^{\sx+}_\cap(t:\alpha)$ is monotone under consideration of additional collections. Regarding $\hat{i}^{\sx-}_\cap(t:\alpha)$, the exact same argumentation holds, replacing $\nu^{\mathds{1}_{\alpha, s}}_{t}$ by $\mathbb{P}^{\mathds{1}_{\alpha, s}}$. Hence also $\hat{i}^{\sx-}_\cap(t:\alpha)$ is monotone according to \cref{def_ppid_axioms}.

\end{proof}

\subsubsection{Invariance under invertible measurable maps} \label{subsection_properties_subsubsection_invariance}

Invariance under the mentioned maps is highly beneficial property, specifically for applications on real-world data, since this property implies independence of the unit one measures in, and translation invariance of the data. As long as the relative frequencies of the data points is not meddled with, one can bijectively map them anywhere, generate a suitable measure, and apply the theory developed here. 

\begin{proposition} \label{proposition_invariance_local_sxpid}
The shared exclusions based redundant information measure from \cref{def_continuous_PID} is invariant under invertible measurable maps.
\end{proposition}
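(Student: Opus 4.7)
The plan is to reduce invariance to the standard pushforward rule for Radon--Nikod\'{y}m derivatives under a bimeasurable bijection. I would let $\phi_{S_i}:E_{S_i}\to E'_{S_i}$ and $\phi_T:E_T\to E'_T$ be invertible measurable maps with measurable inverses, and define the transformed random variables $S'_i:=\phi_{S_i}\circ S_i$, $T':=\phi_T\circ T$, with realizations $s'_i:=\phi_{S_i}(s_i)$ and $t':=\phi_T(t)$. The goal is to show $i^{\sx}_\cap(t':\alpha)=i^{\sx}_\cap(t:\alpha)$ when the left-hand side is computed in the primed system.

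First I would observe that the conditioning event on $\Omega$ is unaffected by the relabelling. Since each $\phi_{S_i}$ is a bijection, $\Pi_{S_i}(\{s_i\})=\Pi_{S'_i}(\{s'_i\})$ as subsets of $\Omega$, so $R_{\alpha,s}=R_{\alpha,s'}$ and hence $\mathds{1}_{\alpha,s}=\mathds{1}_{\alpha,s'}$ pointwise. By the almost-sure uniqueness in \cref{thrm_disintegration,thrm_regular_conditional_probability}, the disintegrated measure $\nu_{\alpha,s}$ on $(\Omega,\mathcal{B}(\Omega))$ is literally the same object regardless of which labelling is used. The marginals on the target side then transform by pushforward: $\mathbb{P}^{T'}=(\phi_T)_*\mathbb{P}^T$ and $\nu^{T'}_{\alpha,s'}=(\phi_T)_*\nu^T_{\alpha,s}$. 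Absolute continuity lifts, because any $\mathbb{P}^{T'}$-null set $N\in\mathcal{B}(E'_T)$ has $\phi_T^{-1}(N)$ a $\mathbb{P}^T$-null set (using measurability of $\phi_T^{-1}$) and hence a $\nu^T_{\alpha,s}$-null set.

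Second, I would invoke the pushforward rule for Radon--Nikod\'{y}m derivatives: for a bimeasurable bijection $\phi$ and $\sigma$-finite measures $\mu\ll\nu$,
\[
\frac{d(\phi_*\mu)}{d(\phi_*\nu)}\bigl(\phi(x)\bigr)=\frac{d\mu}{d\nu}(x) \qquad \nu\text{-a.e.}
\]
This identity follows from the change-of-variables formula applied to test sets of the form $\phi^{-1}(B)$ together with the uniqueness clause of \cref{thrm_radon_nikodym}: both sides define the same measure on $(E'_T,\mathcal{B}(E'_T))$ when integrated against $(\phi_T)_*\nu^T_{\alpha,s}$. Applying it to $\mu=\nu^T_{\alpha,s}$, $\nu=\mathbb{P}^T$, $\phi=\phi_T$ yields
\[
\frac{d\nu^{T'}_{\alpha,s'}}{d\mathbb{P}^{T'}}(t')=\frac{d\nu^T_{\alpha,s}}{d\mathbb{P}^T}(t),
\]
so taking $\log_2$ gives the claimed invariance $i^{\sx}_\cap(t':\alpha)=i^{\sx}_\cap(t:\alpha)$.

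The main obstacle I anticipate is making the pushforward rule for Radon--Nikod\'{y}m derivatives rigorous at the level of generality needed here, namely for arbitrary bimeasurable bijections between Radon Borel spaces rather than just diffeomorphisms of $\mathbb{R}^n$ with a Jacobian factor; once that lemma is in place the rest is bookkeeping. The extended-value conventions from \cref{def_continuous_PID} are handled by the same bijection argument: bimeasurability preserves both non-absolute-continuity and the vanishing of the Radon--Nikod\'{y}m derivative at a single point, so the values $\pm\infty$ transport consistently from $t$ to $t'$.
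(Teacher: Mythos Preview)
Your proposal is correct and follows essentially the same route as the paper: first observe that the conditioning event $R_{\alpha,s}$ on $\Omega$ is literally unchanged under the source relabellings (since each $\Pi_{S_i}(\{s_i\})$ equals $\Pi_{S'_i}(\{s'_i\})$), and then establish that the Radon--Nikod\'{y}m derivative is preserved under pushforward by $\phi_T$ on the target side. The only substantive difference is that you \emph{assume} bimeasurability of the $\phi$'s and flag making the pushforward rule rigorous as the main obstacle, whereas the paper dispatches this concern by invoking Souslin's result that in the present standard Borel setting any measurable bijection is automatically a Borel isomorphism (hence bimeasurable, and the pushforward rule follows from uniqueness of the Radon--Nikod\'{y}m derivative together with the change-of-variables formula the paper already appealed to in \cref{eqn_rn_derivative_mapped_space}); your treatment of the $\pm\infty$ conventions is a nice addition the paper does not spell out.
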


\begin{proof}

The property of invariance under this sort of maps with respect to the random variables $\textbf{S}, T$ and realizations $s, t$ can then readily been observed when realizing, that \cref{eqn_continuous_PID_realizations} and \cref{eqn_continuous_PID_collections} do not involve the auxiliary measure $\lambda_\mathbf{S}$ or $\lambda_T$.

Another way of seeing this, is to recall that all the $\sigma$-algebras we condition on, remain the same since they lie on $\Omega$ an are not affected by transformations carried out on $E$.

For a more sophisticated proof, see the following.

First note that by Souslin's studies of analytic sets\cite{souslin_analytic_sets}, any invertible, hence bijective measurable map is a topological isomorphism, and hence bicontinuous in the setting present. Due to this, the transformed space has a natural topology and thus Borel $\sigma$-algebra given via push forwards, making the transformed space a Radon Borel measurable space, and hence the theory developed in \cref{section_mathematical_background} is also valid for the transformed spaces. 

Then consider families of invertible, measurable, maps $\phi = \{\phi_i\}$ for the family $\{E_{S_i}\}$ and $\psi$ for $E_T$. Then introduce the notation $\tilde{s} = \{\phi_i(s_i)\}$ and $\tilde{t} = \psi(t_j)$ for the realizations. Thus, purely by notation, the transformed \cref{eqn_continuous_PID_realizations} becomes 
\begin{align*}
    i^{\sx}_{\cap}(\tilde{t} : \tilde{s}) = \log \left[ \frac{d\nu^{\psi \circ T}_{\tilde{s}}}{d\mathbb{P}^{\psi \circ T}} \left(\,\tilde{t}\,\right) \right] \; .
\end{align*}

By \cref{def_injections} it is evident that, since the marginal measures utilized, in the push forward, contain the preimage under $T$, the transformed mapping $\Pi_{\psi \circ T}\left(\psi(B)\right)$ contains the preimage of $\psi \circ T$ of $\psi(B)$, that is $(\psi \circ T)^{-1}(\psi(B)) = T^{-1}(B)$ for $B \in \mathcal{B}(E_T)$, with the same holding for $\mathbb{P}^T$. Hence, $\mathbb{P}^{T}$ - almost everywhere there holds  

\begin{align*}
    \frac{d\nu^{\psi \circ T}_{\tilde{s}}}{d\mathbb{P}^{\psi \circ T}} \left(\,\tilde{t}\,\right) = \frac{d\nu^T_{\tilde{s}}}{d\mathbb{P}^{T}} \left( t \right)
\end{align*}

on $\mathcal{B}(E_T)$ due to the uniqueness in \cref{thrm_disintegration}.

Further, the changes in $\tilde{s}$ are covered by a similar argument. As $\nu^T_{s}$ was constructed via $R_s = \bigcup\limits_{i=1}^{N_S} \Pi_{S_i}\left(\{s\}\right)$, any change by an invertible map $\phi_i$ is immediately undone in the set description of $R_s$. Hence, under a transformation $\{S_i\} \to \{\phi_i(S_i)\}$, we have $R_{\tilde{s}} = R_s$, and thus $\nu^V_{\tilde{s}, T} = \nu^V_{s, T}$. 

With the arguments above, it becomes clear that
\begin{align*}
    i^{\sx}_{\cap}(\tilde{t} : \tilde{s}) = \log \left[ \frac{d\nu^{\psi \circ T}_{\tilde{s}}}{d\mathbb{P}^{\psi \circ T}} \left( \, \tilde{t} \, \right) \right]  =  \log \left[ \frac{d\nu^T_{s}}{d\mathbb{P}^{T}} \left( t\right)  \right] = i^{\sx}_\cap(t:s) \; .
\end{align*}
The same argumentation holds exactly for collections $\alpha=\{\mathbf{a}_1,\ldots,\mathbf{a}_m\}$, with $R_{\tilde{s}} = R_s$ replaced by 
$R_{\alpha, \tilde{s}} = R_{\alpha, s}$ resulting in $\nu^T_{\alpha, \tilde{s}} = \nu^T_{\alpha, s}$ on $\mathcal{B}(E_T)$. Hence, there also holds $i^{\sx}_{\cap}(\tilde{t} : \alpha) =  i^{\sx}_{\cap}(t:\alpha)$.

\end{proof}

\subsubsection{Differentiability of $i^{\sx}_\cap$ with respect to probability densities} \label{subsubsection_differentiability_of_isx}
For this measure of pointwise redundant information to be feasibly applied in neural networks, it is crucial that $i^{\sx}_\cap$ is differentiable with respect to the underlying probability density, or equivalently, with respect to the probability measure $\mathbb{P}$. Practically, this would be the density function one would try to approximate via data. 
By differentiability, we here mean that a slight shift in the underlying probability measure $\mathbb{P}$ in the direction of another probability measure $\mathbb{Q}$ of strength $\epsilon$, i.e. $\mathbb{P} \to \mathbb{P} + \epsilon \mathbb{Q}$, amounts for a finite change of $i^{\sx}_\cap(t:\alpha)[\mathbb{P}]$ as a functional of $\mathbb{P}$. Explicitly, we will consider the derivative in its form $\lim\limits_{\epsilon \to 0} \frac{i^{\sx}_\cap(t:\alpha)[\mathbb{P} + \epsilon \mathbb{Q}] - i^{\sx}_\cap(t:\alpha)[\mathbb{P}]}{\epsilon}$. This form is equivalent to $\left. \big(\partial_\epsilon  i^{\sx}_\cap(t:\alpha)[\mathbb{P} + \epsilon \mathbb{Q}] \big) \right\vert_{\epsilon=0}$.

\begin{proposition} \label{proposition_differentiability_local_sxpid}
 The shared exclusions based redundant information measure from \cref{def_continuous_PID} is differentiable with respect to the underlying probability measure $\mathbb{P}$ and thus, the resulting probabilty density $\frac{d\mathbb{P}^V}{d\lambda}$ for a suitable $\lambda$. 
\end{proposition}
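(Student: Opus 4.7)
The plan is to rewrite $i^{\sx}_\cap(t:\alpha)$ as an explicit sum of logarithms of Radon-Nikod\'ym derivatives that depend linearly on $\mathbb{P}$ once a fixed reference measure is chosen, and then to differentiate term-by-term using the standard logarithm rule.

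First, starting from the Bayesian reformulation \cref{eqn_monotonicity_bayes_reformulation}, I would note that $\mathbb{P}^{\mathds{1}_{\alpha,s}}(\{1\}) = \mathbb{P}(R_{\alpha,s})$ and, using the auxiliary measure $\eta_A(B) := \mathbb{P}(A \cap T^{-1}(B))$ for $A \in \mathcal{B}(\Omega)$ and $B \in \mathcal{B}(E_T)$, that \cref{eqn_regular_conditional_probability_as_rn_derivative} yields $\nu^{\mathds{1}_{\alpha,s}}_t(\{1\}) = \frac{d\eta_{R_{\alpha,s}}}{d\mathbb{P}^T}(t)$. Choosing a reference measure $\lambda_T$ on $E_T$ as furnished by \cref{corollary_isomorphisms_standard_borel_spaces} that additionally dominates the perturbing $\mathbb{Q}^T$, the local measure decomposes additively as
\begin{equation*}
i^{\sx}_\cap(t:\alpha) = \log \frac{d\eta_{R_{\alpha,s}}}{d\lambda_T}(t) - \log \frac{d\mathbb{P}^T}{d\lambda_T}(t) - \log \mathbb{P}(R_{\alpha,s})\,.
\end{equation*}

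Next, each of $\eta_{R_{\alpha,s}}$, $\mathbb{P}^T$ and the scalar $\mathbb{P}(R_{\alpha,s})$ is a linear functional of $\mathbb{P}$, and the linearity propagates through the Radon-Nikod\'ym derivative once the reference measure is held fixed. Substituting $\mathbb{P} \mapsto \mathbb{P} + \epsilon \mathbb{Q}$ and introducing $\tilde\eta_A(B) := \mathbb{Q}(A \cap T^{-1}(B))$,
\begin{align*}
\frac{d\eta_{R_{\alpha,s}}[\mathbb{P} + \epsilon \mathbb{Q}]}{d\lambda_T}(t) &= \frac{d\eta_{R_{\alpha,s}}}{d\lambda_T}(t) + \epsilon \frac{d\tilde\eta_{R_{\alpha,s}}}{d\lambda_T}(t)\,, \\
\frac{d(\mathbb{P}^T + \epsilon \mathbb{Q}^T)}{d\lambda_T}(t) &= \frac{d\mathbb{P}^T}{d\lambda_T}(t) + \epsilon \frac{d\mathbb{Q}^T}{d\lambda_T}(t)\,,
\end{align*}
and the third term becomes $\mathbb{P}(R_{\alpha,s}) + \epsilon \mathbb{Q}(R_{\alpha,s})$. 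Applying $\partial_\epsilon \log(a + \epsilon b)|_{\epsilon=0} = b/a$ to each summand gives the Gateaux derivative
\begin{equation*}
\partial_\epsilon i^{\sx}_\cap(t:\alpha)[\mathbb{P} + \epsilon \mathbb{Q}]\big|_{\epsilon=0} = \frac{d\tilde\eta_{R_{\alpha,s}}/d\lambda_T(t)}{d\eta_{R_{\alpha,s}}/d\lambda_T(t)} - \frac{d\mathbb{Q}^T/d\lambda_T(t)}{d\mathbb{P}^T/d\lambda_T(t)} - \frac{\mathbb{Q}(R_{\alpha,s})}{\mathbb{P}(R_{\alpha,s})}\,,
\end{equation*}
which is finite precisely when $i^{\sx}_\cap(t:\alpha)$ is, as both conditions reduce to the three denominators being nonzero. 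Differentiability with respect to the density $\frac{d\mathbb{P}^V}{d\lambda}$ then follows by the chain rule, since $\mathbb{P}^V$ and hence every marginal and joint object appearing above is determined by this density.

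The main obstacle I expect is the clean justification of the linearity of the Radon-Nikod\'ym derivative under the perturbation: this demands choosing $\lambda_T$ independently of $\epsilon$ so that it dominates $\mathbb{P}^T + \epsilon \mathbb{Q}^T$ for all sufficiently small $\epsilon$, which is handled by combining the reference measures provided by \cref{corollary_isomorphisms_standard_borel_spaces} for $\mathbb{P}^T$ and $\mathbb{Q}^T$. A secondary subtlety worth flagging is that $\mathbb{P} + \epsilon \mathbb{Q}$ is not a probability measure when $\mathbb{Q}$ is; the computation above is therefore a Gateaux derivative of a functional on signed measures, and restricting to genuine probability-measure curves such as $\mathbb{P}_\epsilon := (1-\epsilon)\mathbb{P} + \epsilon \mathbb{Q}$ only reparametrises $\epsilon$ and leaves differentiability intact.
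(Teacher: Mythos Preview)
Your proposal is correct and reaches the same Gateaux derivative as the paper, but via a genuinely cleaner decomposition. The paper splits $i^{\sx}_\cap(t:\alpha)$ as $\log\bigl[\tfrac{d\nu^T_{\alpha,s}}{d\lambda_T}\bigr]-\log\bigl[\tfrac{d\mathbb{P}^T}{d\lambda_T}\bigr]$; since the conditional $\nu^T_{\alpha,s}$ is itself a ratio in $\mathbb{P}$, perturbing $\mathbb{P}\mapsto\mathbb{P}+\epsilon\mathbb{Q}$ forces a rather elaborate double Radon--Nikod\'ym expansion (the $\cdot_1,\dots,\cdot_4$ bookkeeping) before L'H\^opital can be applied. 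You instead push the Bayesian reformulation one step further and expose three summands, each \emph{linear} in $\mathbb{P}$: the joint density $d\eta_{R_{\alpha,s}}/d\lambda_T$, the marginal density $d\mathbb{P}^T/d\lambda_T$, and the scalar $\mathbb{P}(R_{\alpha,s})$. The derivative then drops out from the elementary identity $\partial_\epsilon\log(a+\epsilon b)\big|_{\epsilon=0}=b/a$. Your three resulting terms are exactly the paper's $\tfrac{d\mu^T_{\alpha,s,\mathbb{Q}}}{d\nu^T_{\alpha,s}}(t)$, $-\tfrac{d\mathbb{Q}^T}{d\mathbb{P}^T}(t)$ and $-\tfrac{d\mathbb{Q}^{\mathds{1}_{\alpha,s}}}{d\mathbb{P}^{\mathds{1}_{\alpha,s}}}(1)$ once one unpacks the notation, so the endpoint agrees term-by-term. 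What your route buys is transparency and brevity; what the paper's route buys is that it stays phrased in the disintegration language throughout, which connects more directly to the rest of \cref{section_continuation_PID}. Your remarks on fixing $\lambda_T$ to dominate both $\mathbb{P}^T$ and $\mathbb{Q}^T$, and on the (harmless) fact that $\mathbb{P}+\epsilon\mathbb{Q}$ is not itself a probability measure, are apt and actually sharper than what the paper makes explicit.
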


\begin{proof}
Here we will first consider differentiability with respect to the underlying probability measure $\mathbb{P}$, and then extend the notion to densities in a straightforward way.

First note that the set of all $\sigma$-finite signed measures on a measurable space forms a vector space, in which the measures which are absolutely continuous with respect to a reference measure $\gamma$, form a linear subspace. This subspace we will denote by $M_\gamma$. Further note that $M_\mu \subset M_\gamma \iff \mu \ll \gamma$, i.e. one obtains a decreasing family of linear subspaces in correspondence to absolute continuity. Let $\mathbb{Q} \in M_{\mathbb{P}}$ be a probability measure. Then we will consider the derivative $\lim\limits_{\epsilon \to 0} \frac{i^{\sx}_\cap(t:\alpha)[\mathbb{P} + \epsilon \mathbb{Q}] - i^{\sx}_\cap(t:\alpha)[\mathbb{P}]}{\epsilon}$ to calculate the equivalent of a directional derivative with respect to $\mathbb{Q}$, denoted by $\partial^{\mathbb{Q}}_{\mathbb{P}}$.

From \cref{eqn_continuous_PID_collections} and \cref{eqn_local_MI_differential_form} we see that evaluating $\partial^{\mathbb{Q}}_{\mathbb{P}} i^{\sx}_\cap(t:\alpha)[\mathbb{P}]
$ is the same as finding the above limit for $\log \left[\frac{d\mathbb{P}^T}{d\lambda_T} \right]$ and $\log \left[ \frac{d\nu_{\alpha, s}^T}{d\lambda_T} \right]$ individually. Defining $\mathbb{Q}^T = \mathbb{Q} \circ \Pi_T$, starting with the marginal density $p_T(t) = \frac{d\mathbb{P}^T}{d\lambda_T}(t)$ and analogously $q_T(t) = \frac{d\mathbb{Q}^T}{d\lambda_T}(t)$, we get
\begin{align}
    \partial^{\mathbb{Q}}_{\mathbb{P}} \log \left[ \frac{d\mathbb{P}^T}{d\lambda_T} (t) \right] &= \lim\limits_{\epsilon \to 0} \frac{1}{\epsilon} \left( \log \left[ \frac{d\left(\mathbb{P}^T + \epsilon \mathbb{Q}^T\right)}{d\lambda_T}(t) \right] - \log \left[ \frac{d\mathbb{P}^T}{d\lambda_T}(t) \right] \right) \nonumber \\
    &= \lim\limits_{\epsilon \to 0} \frac{1}{\epsilon} \log \left[ \frac{d\left(\mathbb{P}^T + \epsilon \mathbb{Q}^T\right)}{d\mathbb{P}^T}(t) \right] = \lim\limits_{\epsilon \to 0} \frac{1}{\epsilon} \log \left[ 1 + \epsilon \frac{d\mathbb{Q}^T}{d\mathbb{P}^T}(t) \right] \nonumber \\
    &= \lim\limits_{\epsilon \to 0} \frac{\frac{d\mathbb{Q}^T}{d\mathbb{P}^T}(t)}{1 + \epsilon \frac{d\mathbb{Q}^T}{d\mathbb{P}^T}(t)} = \frac{d\mathbb{Q}^T}{d\mathbb{P}^T}(t) = \frac{\frac{d\mathbb{Q}^T}{d\lambda_T}(t)}{\frac{d\mathbb{P}^T}{d\lambda_T}(t)} = \frac{q(t)}{p(t)} \; . \label{eqn_continuous_pid_differentiable_density_part_limit}
\end{align}
where we have used L'Hopitals rule for jointly undefined limits in the fifth equality. Here $q_T(t) = \frac{d\mathbb{Q}^T}{d\lambda_T}(t)$ is the marginal density of the deviation in probability measure, $\mathbb{Q}^T$, on $E_T$. Thus the derivative exists exactly when $p(t) \neq 0$ which was the same condition for $i_\cap^{\sx}(t:\alpha)$ to exist in the first place. 

We now use a violent abuse of notation; during the rest of the proof, a push forward will be carrying a new label connected to the map with respect to which we push forward. A measure with a superscript $\cdot^{\mathds{1}_{\alpha, s}}$ denotes a push forward by $\mathds{1}_{\alpha, s}$ while a $\cdot^V$, as usual, denotes a push forward by the joint random variable $V = (S_1, \ldots, S_{N_S}, T)$. 

Let again $\mathbb{Q}$ be chosen as described above. Combining \cref{eqn_regular_conditional_probability_as_rn_derivative} and the definition of the marginal measure given above, $\nu^T_{\alpha, s}$ is obtained as Radon-Nikod\'{y}m derivative by $\nu^T_{\alpha, s}(B) = \frac{d\mathbb{P}^{\mathds{1}_{\alpha, s}}( \mathds{1}_{\alpha, s} \circ \Pi_T(B) \cap \cdot)}{d\mathbb{P}^{\mathds{1}_{\alpha, s}}(\cdot)}(1)$ for all $ B \in \mathcal{B}\left(E_{T}\right)$. The argument $1$ here stems from the conditioning on $R_{\alpha, s}$, or equivalently, $\mathds{1}_{\alpha, s} = 1$.

Let $\tilde{\nu}_{\alpha, s, \mathbb{Q}, \epsilon}^T$ be the marginal regular conditional probability measure on $E_T$ that was obtained via $\mathbb{P} + \epsilon \mathbb{Q}$ instead of $\mathbb{P}$ as in \cref{thrm_disintegration}. Additionally, let $\mu_{\alpha, s, \mathbb{Q}}^T$ be the marginal regular conditional probability measure generated via $\mathbb{Q}$ alone, exactly as $\nu_{\alpha, s}^T$, just with $\mathbb{Q}$ instead of $\mathbb{P}$.  

Then investigating the partial differentiability of $\frac{d\nu_{\alpha, s}^T}{d\lambda_T}$ as a functional of $\mathbb{P}$ amounts to evaluate
\begin{align}
    \partial^{\mathbb{Q}}_{\mathbb{P}} \log \left[ \frac{d\nu_{\alpha, s}^T}{d\lambda_T} (t) \right] &= \lim\limits_{\epsilon \to 0} \frac{1}{\epsilon} \left( \log \left[ \frac{d\tilde{\nu}_{\alpha, s, \mathbb{Q}, \epsilon}^T}{d\lambda_T}(t) \right] - \log \left[ \frac{d\nu_{\alpha, s}^T}{d\lambda_T} (t) \right] \right) \nonumber \\
    &= \lim\limits_{\epsilon \to 0} \frac{1}{\epsilon} \log \left[ \frac{d\tilde{\nu}_{\alpha, s, \mathbb{Q}, \epsilon}^T}{d\nu_{\alpha, s}^T}(t) \right] \; . \label{eqn_continuous_pid_differentiable_regular_conditional_probability_part_limit}
\end{align}
Next we want to expand $\frac{d\tilde{\nu}_{\alpha, s, \mathbb{Q}, \epsilon}^T}{d\nu^T_{\alpha, s}}(t)$ into the Radon-Nikod\'{y}m derivatives with respect to the probability measures $\mathbb{P}$. 

Further, noticing that an expansion results in two consecutive Radon-Nikod\'{y}m derivatives in a single expression, we have to make clear how to understand the measures that are in the individual derivatives. To this end, we use the notation $\cdot_1$, $\cdot_2$, $\cdot_3$ and $\cdot_4$, denoting different sets, in order to distinguish the sets with respect to which we understand the measure to be varying inside the particular derivatives. More explicitly, if inside a Radon-Nikod\'{y}m derivative, the same set is present in the numerator as is in the denominator, say $\cdot_2$, the derivative is carried out as if $\cdot_1$ were constant, as in usual partial derivatives. Additionally, we believe this procedure of distinguishing sets amounts for enhanced readability, as one can now easily follow the calculation steps by simply following the sets $\cdot_i$,  $i=1,2,3,4$. Note that while $\cdot_1$ represents a set in $\mathcal{B}(E_T)$, $\cdot_{2, 3, 4} \in \mathcal{B}(\{0,1\})$.

Carrying out the expansion, we obtain
\begin{align*}
    \frac{d\tilde{\nu}_{\alpha, s, \mathbb{Q}, \epsilon}^T}{d\nu^T_{\alpha, s}}\left(t\right) &= \frac{d\tilde{\nu}_{\alpha, s, \mathbb{Q}, \epsilon}^T(\cdot_1)}{d\nu^T_{\alpha, s}(\cdot_1)}\left(t\right) = \frac{ d \left[\frac{ d\left(\mathbb{P}^{\mathds{1}_{\alpha, s}} + \epsilon \mathbb{Q}^{\mathds{1}_{\alpha, s}}\right)\left( \mathds{1}_{\alpha, s} \circ \Pi_T(\cdot_1) \cap \cdot_2\right) }{ d\left(\mathbb{P}^{\mathds{1}_{\alpha, s}} + \epsilon \mathbb{Q}^{\mathds{1}_{\alpha, s}}\right)(\cdot_2) } (1) \right]}{ d\nu^T_{\alpha, s}(\cdot_1) } \left(t\right) \\ 
    &= \frac{ d \left[\frac{ \frac{ d\left(\mathbb{P}^{\mathds{1}_{\alpha, s}} + \epsilon \mathbb{Q}^{\mathds{1}_{\alpha, s}}\right)\left( \mathds{1}_{\alpha, s} \circ \Pi_T(\cdot_1) \cap \cdot_2 \right)   }{d\mathbb{P}^{\mathds{1}_{\alpha, s}}(\cdot_2)} (1) }{ \frac{  d\left(\mathbb{P}^{\mathds{1}_{\alpha, s}} + \epsilon \mathbb{Q}^{\mathds{1}_{\alpha, s}}\right)(\cdot_4)}{d\mathbb{P}^{\mathds{1}_{\alpha, s}}(\cdot_4)} (1)}
    \right]}{ d\nu^T_{\alpha, s}(\cdot_1)} \left(t\right) \\
    &= \frac{ d \left[\frac{ \nu^T_{\alpha, s}(\cdot_1) + \epsilon  \frac{ d \mathbb{Q}^{\mathds{1}_{\alpha, s}}\left( \mathds{1}_{\alpha, s} \circ \Pi_T( \cdot_1 ) \cap \cdot_2 \right)  }{d\mathbb{P}^{\mathds{1}_{\alpha, s}}(\cdot_2)} (1) }{ 1 + \epsilon \frac{  d\mathbb{Q}^{\mathds{1}_{\alpha, s}}(\cdot_4)} {d\mathbb{P}^{\mathds{1}_{\alpha, s}}(\cdot_4)} (1)}
    \right]}{ d\nu^T_{\alpha, s}(\cdot_1)} \left(t\right) \; .
\end{align*}

Then the linearity of the Radon-Nikod\'{y}m derivative leads to
\begin{align*}
  \frac{d\tilde{\nu}_{\alpha, s, \mathbb{Q}, \epsilon}^T(\cdot_1)}{d\nu^T_{\alpha, s}(\cdot_1)}\left(t\right) &= \frac{ 1 }{ 1 + \epsilon \frac{  d\mathbb{Q}^{\mathds{1}_{\alpha, s}}(\cdot_4)} {d\mathbb{P}^{\mathds{1}_{\alpha, s}}(\cdot_4)} (1)} \left( \frac{ d \nu^T_{\alpha, s}(\cdot_1)
    }{ d \nu^T_{\alpha, s}(\cdot_1)} \left(t\right) + \epsilon \frac{ d \left[ \frac{ d\mathbb{Q}^{\mathds{1}_{\alpha, s}} \left( \mathds{1}_{\alpha, s} \circ \Pi_T( \cdot_1)  \cap \cdot_2\right) }{ d\mathbb{P}^{\mathds{1}_{\alpha, s}}(\cdot_2) } (1) 
    \right]}{ d\nu^T_{\alpha, s}(\cdot_1)} \left(t\right) \right) \\
    &= \frac{ 1 }{ 1 + \epsilon \frac{  d\mathbb{Q}^{\mathds{1}_{\alpha, s}}(\cdot_4)} {d\mathbb{P}^{\mathds{1}_{\alpha, s}}(\cdot_4)} (1)} \left( 1 + \epsilon \frac{d\mu_{\alpha, s, \mathbb{Q}}^T}{d\nu^T_{\alpha, s}}\left(t\right) \right) \; .
\end{align*}
With this finally the limit \cref{eqn_continuous_pid_differentiable_regular_conditional_probability_part_limit} can be evaluated as
\begin{align}
    \lim\limits_{\epsilon \to 0} \frac{1}{\epsilon} \log \left[ \frac{d\tilde{\nu}_{\alpha, s, \mathbb{Q}, \epsilon}^T(\cdot_1)}{d\nu^T_{\alpha, s}(\cdot_1)}\left(t\right) \right] &= \lim\limits_{\epsilon \to 0} \frac{1}{\epsilon} \left( \log \left[ 1 + \epsilon \frac{d\mu_{\alpha, s, \mathbb{Q}}^T}{d\nu^T_{\alpha, s}}\left(t\right) \right]  - \log \left[ 1 + \epsilon \frac{  d\mathbb{Q}^{\mathds{1}_{\alpha, s}}(\cdot_4)} {d\mathbb{P}^{\mathds{1}_{\alpha, s}}(\cdot_4)} (1) \right]  \right) \nonumber \\
    &\leftstackrel{\tiny L'Hopital}{=} \lim\limits_{\epsilon \to 0} \left( \frac{\frac{d\mu_{\alpha, s, \mathbb{Q}}^T}{d\nu^T_{\alpha, s}}\left(t\right)}{1 + \epsilon \frac{d\mu_{\alpha, s, \mathbb{Q}}^T}{d\nu^T_{\alpha, s}}\left(t\right) } - \frac{\frac{  d\mathbb{Q}^{\mathds{1}_{\alpha, s}}(\cdot_4)} {d\mathbb{P}^{\mathds{1}_{\alpha, s}}(\cdot_4)} (1)}{1 + \epsilon \frac{  d\mathbb{Q}^{\mathds{1}_{\alpha, s}}(\cdot_4)} {d\mathbb{P}^{\mathds{1}_{\alpha, s}}(\cdot_4)} (1)} \right) \nonumber \\
    &= \frac{d\mu_{\alpha, s, \mathbb{Q}}^T}{d\nu^T_{\alpha, s}}\left(t\right)
    - \frac{  d\mathbb{Q}^{\mathds{1}_{\alpha, s}}(\cdot_4)} {d\mathbb{P}^{\mathds{1}_{\alpha, s}}(\cdot_4)} (1) \; . \label{eqn_continuous_pid_differentiable_regular_conditional_probability_part_limit_evaluation}
\end{align}

Note that $\frac{d\mu_{\alpha, s, \mathbb{Q}}^T}{d\nu^T_{\alpha, s}}\left(t\right)$ can be expanded to be $\frac{ \frac{d\mu_{\alpha, s, \mathbb{Q}}^T}{d\lambda_T}(t)}{\frac{d\nu^T_{\alpha, s}}{\lambda_T}(t)}$, yielding the fraction of two conditional densities, $q_T(t | \alpha, s)$ and $p_T(t|\alpha, s)$.

Then the calculation here relates to absolute differentiability with respect to the probability density $p$ as can be seen when understanding $i^{\sx}_\cap(t : \alpha)[\mathbb{Q}]$ as a functional of $q$ instead of $\mathbb{Q}$. 

Combining \cref{eqn_continuous_pid_differentiable_density_part_limit} and \cref{eqn_continuous_pid_differentiable_regular_conditional_probability_part_limit_evaluation}, we see that the derivative of $i^{\sx}_\cap(t:\alpha)$ in the direction of $\mathbb{Q}$ becomes
\begin{align*}
    \partial^{\mathbb{Q}}_{\mathbb{P}} i^{\sx}_\cap (t:\alpha)[\mathbb{P}] = \frac{d\mu_{\alpha, s, \mathbb{Q}}^T}{d\nu^T_{\alpha, s}}(t)
    - \frac{  d\mathbb{Q}^{\mathds{1}_{\alpha, s}}} {d\mathbb{P}^{\mathds{1}_{\alpha, s}}} (1) - \frac{d\mathbb{Q}^T}{d\mathbb{P}^T}(t) \; .
\end{align*}
Writing this derivative in terms of the full density $p$ which corresponds to $\mathbb{P}$, that was perturbed in the direction of $q(s, t) = \frac{d\mathbb{Q}^V}{d\lambda}(s, t)$ with strength $\epsilon$ we obtain
\begin{align*}
    \partial^{q}_{p} i^{\sx}_\cap (t:\alpha)[p] =  \frac{q_T(t | \alpha, s)}{p_T(t | \alpha, s)} - \frac{ \int_{V\left(R_{\alpha, s}\right)} q_{\mathbf{S}}\left(s'\right) d\lambda_{\mathbf{S}}(s')} {\int_{V\left(R_{\alpha, s}\right)} p_{\mathbf{S}}\left(s'\right) d\lambda_{\mathbf{S}}(s')} - \frac{q_T(t)}{p_T(t)} \; .
\end{align*}
Here the densities $p_{\mathbf{S}}, q_{\mathbf{S}}$ correspond to the marginal distribution on $E_{-T}$. 
This form makes it evident that the shared exclusions based measure for redundant information is differentiable with respect to $p$ on the space of all integrable functions with respect to the measure $\lambda = \lambda_\mathbf{S} \times \lambda_T$.

\end{proof}

\begin{remark}
Moreover, since the resulting form is hyperbolic, and depends on both marginals on both $E_{T}$ and $E_{-T}$, it is continuous with respect to $p$, too. Further, we suspect $i^{\sx}_\cap[\mathbb{P}]$ to be smooth with respect to changes in the underlying probability measure. 
Thus we suspect that, as the function $\frac{1}{x}$ is smooth on $\mathbb{R} \setminus \{0\}$, the local shared exclusions' based redundant information $i_\cap^{\sx}$ is smooth whenever $p \neq 0$. Explicitly, this can be seen through again evaluating the derivative $\left. \big(\partial_\epsilon   \partial^q_p i^{\sx}_\cap(t:\alpha)[\mathbb{P} + \epsilon \mathbb{R}] \big) \right\vert_{\epsilon=0}$ for $\mathbb{R} \ll \mathbb{Q} \ll \mathbb{P}$ yet another probability measure, representing a directional derivative in an almost arbitrary other direction.
Evaluation leads to:
\begin{align}
    \left. \big(\partial_\epsilon   \partial^q_p i^{\sx}_\cap(t:\alpha)[\mathbb{P} + \epsilon \mathbb{R}] \big) \right\vert_{\epsilon=0}= &- \frac{d\mu_{\alpha, s, \mathbb{Q}}^T}{d\nu^T_{\alpha, s}}(t) \frac{d\mu_{\alpha, s, \mathbb{R}}^T}{d\nu^T_{\alpha, s}}(t)
    + \frac{  d\mathbb{Q}^{\mathds{1}_{\alpha, s}}} {d\mathbb{P}^{\mathds{1}_{\alpha, s}}} (1) \frac{  d\mathbb{R}^{\mathds{1}_{\alpha, s}}} {d\mathbb{P}^{\mathds{1}_{\alpha, s}}} (1) \nonumber \\
    &\quad \quad \quad + \frac{d\mathbb{Q}^T}{d\mathbb{P}^T}(t) \frac{d\mathbb{R}^T}{d\mathbb{P}^T}(t) \; .
\end{align}
We believe, this derivation can be continued arbitrarily many times, and believe that the functional $i^{\sx}_\cap[\mathbb{P}]$ is smooth with respect to the underlying probabilty measure $\mathbb{P}$.

\end{remark}

\subsubsection{Target chain rule for $i^{\sx}_\cap$} \label{subsubsection_properties_target_chain_rule}

For the quantity proposed in \cref{def_continuous_PID} to embed nicely in the information-theoretic framework of local mutual information, a treatment of composite targets needs to be considered. For that, we claim that the local shared exclusions' redundant information \cref{eqn_continuous_PID_collections} fulfills a chain rule for the target random variable $T$.

\begin{definition} \label{def_target_chain_rule_conditional_quantity}
Let $T = (T_1, T_2)$ be a joint target variable and fix a realization $t = (t_1, t_2)$. Then we define the local shared information of $t_2$ about $\alpha$, when conditioned on $t_1$, as
\begin{align} \label{eqn}
    %i^{\sx}_\cap\left( t_2:\alpha \mid t_1 \right) := \log \left[  \frac{d\nu^\textbf{S}_{\alpha, s, t}}{d\nu^\textbf{S}_{\alpha, s, t_1}}(s) \right]
    i^{\sx}_\cap\left( t_2:\alpha \mid t_1 \right) := \log \left[ \frac{d\nu^{\mathds{1}_{\alpha, s}}_{t}}{d\nu^{\mathds{1}_{\alpha, s}}_{t_1}}(1)\right]
\end{align}
wherever $\nu_{t} \ll \nu_{t_1} \ll \mathbb{P}$.

Informally, this quantity arises by conditioning each of the measures in \cref{eqn_continuous_PID_collections} (when stated with respect to $t_2$) on $t_1$.
\end{definition}

\begin{proposition}[Target chain rule] \label{proposition_target_chain_rule_local_sxpid}
Let $T = (T_1, T_2)$ be a joint target variable and fix a realization $t = (t_1, t_2)$ as in \cref{def_target_chain_rule_conditional_quantity}. Then, if $\nu_{t} \ll \nu_{t_1} \ll \mathbb{P}$,
\begin{align}
    i^{\sx}_\cap(t:\alpha) = i^{\sx}_\cap(t_1 : \alpha) + i^{\sx}_\cap\left( t_2:\alpha \mid t_1 \right) \; .
\end{align}
\end{proposition}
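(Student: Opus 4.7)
The plan is to rewrite all three terms in their Bayes-equivalent form via \cref{eqn_monotonicity_bayes_reformulation} and \cref{def_target_chain_rule_conditional_quantity}, thereby reducing the target chain rule to the standard chain rule for Radon--Nikod\'ym derivatives on the two-point measurable space $(\{0,1\},\mathcal{B}(\{0,1\}))$ obtained by pushing forward along $\mathds{1}_{\alpha,s}$.

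First I would use
\begin{equation*}
i^{\sx}_\cap(t:\alpha) = \log\left[\frac{d\nu^{\mathds{1}_{\alpha,s}}_{t}}{d\mathbb{P}^{\mathds{1}_{\alpha,s}}}(1)\right], \quad i^{\sx}_\cap(t_1:\alpha) = \log\left[\frac{d\nu^{\mathds{1}_{\alpha,s}}_{t_1}}{d\mathbb{P}^{\mathds{1}_{\alpha,s}}}(1)\right],
\end{equation*}
together with the defining identity
\begin{equation*}
i^{\sx}_\cap(t_2:\alpha\mid t_1) = \log\left[\frac{d\nu^{\mathds{1}_{\alpha,s}}_{t}}{d\nu^{\mathds{1}_{\alpha,s}}_{t_1}}(1)\right]
\end{equation*}
from \cref{def_target_chain_rule_conditional_quantity}. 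Using $\log a+\log b=\log(ab)$, the claim reduces to showing the multiplicative identity
\begin{equation*}
\frac{d\nu^{\mathds{1}_{\alpha,s}}_{t}}{d\mathbb{P}^{\mathds{1}_{\alpha,s}}}(1) \;=\; \frac{d\nu^{\mathds{1}_{\alpha,s}}_{t}}{d\nu^{\mathds{1}_{\alpha,s}}_{t_1}}(1)\cdot \frac{d\nu^{\mathds{1}_{\alpha,s}}_{t_1}}{d\mathbb{P}^{\mathds{1}_{\alpha,s}}}(1),
\end{equation*}
evaluated at the atom $1\in\{0,1\}$.

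Next I would recognise this as the classical Radon--Nikod\'ym chain rule $\tfrac{d\mu}{d\rho}=\tfrac{d\mu}{d\nu}\tfrac{d\nu}{d\rho}$ (valid $\rho$-a.e.) applied to $\mu=\nu^{\mathds{1}_{\alpha,s}}_{t}$, $\nu=\nu^{\mathds{1}_{\alpha,s}}_{t_1}$ and $\rho=\mathbb{P}^{\mathds{1}_{\alpha,s}}$. The required absolute-continuity chain $\mu\ll\nu\ll\rho$ must be imported from the hypothesis $\nu_t\ll\nu_{t_1}\ll\mathbb{P}$ on $(\Omega,\mathcal{B}(\Omega))$: if $B\subset\{0,1\}$ satisfies $\nu^{\mathds{1}_{\alpha,s}}_{t_1}(B)=0$, then $\mathds{1}_{\alpha,s}^{-1}(B)$ is a $\nu_{t_1}$-null set, hence a $\nu_{t}$-null set by assumption, so $\nu^{\mathds{1}_{\alpha,s}}_{t}(B)=0$; the same argument with $\mathbb{P}$ in the denominator yields $\nu^{\mathds{1}_{\alpha,s}}_{t_1}\ll\mathbb{P}^{\mathds{1}_{\alpha,s}}$. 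Then \cref{thrm_radon_nikodym} provides all three densities on $\{0,1\}$ and their chain rule holds $\mathbb{P}^{\mathds{1}_{\alpha,s}}$-a.e.

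Finally I would justify that the identity, a priori only valid almost surely, can be evaluated at the single atom $1$. Since $\mathbb{P}^{\mathds{1}_{\alpha,s}}$ lives on the two-atom space $\{0,1\}$, an identity holding $\mathbb{P}^{\mathds{1}_{\alpha,s}}$-a.e.\ automatically holds at $1$ whenever $\mathbb{P}^{\mathds{1}_{\alpha,s}}(\{1\})>0$, and the latter is precisely the condition under which all three densities at $1$ are finite in the sense of \cref{def_continuous_PID}; otherwise the three $i^{\sx}_\cap$-terms would be excluded from the hypothesis. The main obstacle in the argument is therefore not the chain rule itself but the measure-theoretic bookkeeping of transferring $\nu_t\ll\nu_{t_1}\ll\mathbb{P}$ through the $\mathds{1}_{\alpha,s}$-pushforward and confirming that the densities, known only up to $\mathbb{P}^{\mathds{1}_{\alpha,s}}$-null sets, all admit consistent evaluations at $1$; once these have been dealt with, the target chain rule follows in a single line.
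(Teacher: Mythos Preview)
Your proposal is correct and follows essentially the same route as the paper: rewrite each $i^{\sx}_\cap$-term in its Bayes-equivalent form via \cref{eqn_monotonicity_bayes_reformulation} and \cref{def_target_chain_rule_conditional_quantity}, then apply the Radon--Nikod\'ym chain rule to factor $\tfrac{d\nu^{\mathds{1}_{\alpha,s}}_{t}}{d\mathbb{P}^{\mathds{1}_{\alpha,s}}}(1)$ through $\nu^{\mathds{1}_{\alpha,s}}_{t_1}$ and take logarithms. Your version is in fact more careful than the paper's, which simply asserts that ``both Radon-Nikod\'ym derivatives exist since $\nu_t\ll\nu_{t_1}\ll\mathbb{P}$'' without spelling out the transfer of absolute continuity through the $\mathds{1}_{\alpha,s}$-pushforward or the validity of evaluating an a.e.\ identity at the atom $1$.
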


\begin{proof}
Recall that according to \cref{eqn_monotonicity_bayes_reformulation}, $i^{\sx}_\cap(t:\alpha)$ is defined via $\log \left[ \frac{d\nu^{\mathds{1}_{\alpha, s}}_{t}}{d\mathbb{P}^{\mathds{1}_{\alpha, s}}}(1) \right]$.

Consider the measure $\nu^{\mathds{1}_{\alpha, s}}_{t_1}$ that is a well-defined regular measure using an auxiliary random variable $\mathds{1}_{t_1}$ similar as in \cref{def_auxiliary_random_variable}. Then
\begin{align*}
    \frac{d\nu^{\mathds{1}_{\alpha, s}}_{t}}{d\mathbb{P}^{\mathds{1}_{\alpha, s}}}(1) &= \frac{d\nu^{\mathds{1}_{\alpha, s}}_{t}}{d\nu^{\mathds{1}_{\alpha, s}}_{t_1}}(1) \frac{d\nu^{\mathds{1}_{\alpha, s}}_{t_1}}{d\mathbb{P}^{\mathds{1}_{\alpha, s}}}(1) \; ,
\end{align*}
where both Radon-Nikod\'{y}m derivatives exist since $\nu_{t} \ll \nu_{t_1} \ll \mathbb{P}$.

Therefore, when comparing with \cref{def_target_chain_rule_conditional_quantity},
\begin{align*}
    i^{\sx}_\cap(t = \{t_1, t_2\} : \alpha) &= \underbrace{ \log \left[  \frac{d\nu^{\mathds{1}_{\alpha, s}}_{t}}{d\nu^{\mathds{1}_{\alpha, s}}_{t_1}}(1) \right] }_{ = i^{\sx}_\cap\left( t_2:\alpha \mid t_1 \right)} + \underbrace{ \log \left[ \frac{d\nu^{\mathds{1}_{\alpha, s}}_{t_1}}{d\mathbb{P}^{\mathds{1}_{\alpha, s}}}(1)  \right]}_{ = i^{\sx}_\cap(t_1 : \alpha)} = i^{\sx}_\cap\left( t_2:\alpha \mid t_1 \right) + i^{\sx}_\cap(t_1 : \alpha) \; ,
\end{align*}
such that the conditioning on $t_1$ in the term $i^{\sx}_\cap\left( t_2:\alpha\mid t_1 \right)$ translates to conditioning both measures with respect to which the Radon-Nikod\'{y}m derivative is determined.
\end{proof}

\subsubsection{On the associated global measure $I^{\sx}_\cap$}

While the local redundancy-based partial information decomposition quantity $i^{\sx}_\cap(t:s)$ yields some insight about specific realizations and their relation to the local mutual information, for practical purposes a local quantity does not suffice. 
Therefore we introduce a global, realization-independent global redundant information $I^{\sx}_\cap(T:\alpha) = \int_{E} i^{\sx}_\cap(t:\alpha) d\mathbb{P}^V(s, t)$.

\begin{proposition}
Consider a decomposition of $I^{\sx}_\cap(T:\alpha)$ as in \cref{eqn_isxpid_decomposition}, that is 
\begin{align*}
    I^{\sx}_\cap(T:\alpha) = \hat{I}^{\sx +}_\cap(T:\alpha) - \hat{I}^{\sx -}_\cap(T:\alpha) \; ,
\end{align*}
where 
\begin{align*}
    \hat{I}^{\sx \pm}_\cap(T:\alpha) := \int_{E} \hat{i}^{\sx \pm}_\cap(t:\alpha) d\mathbb{P}^V(s, t) \; .
\end{align*}
Then the properties that have been derived in \cref{thrm_local_pid_axioms}, and  \cref{proposition_monotonicity,proposition_invariance_local_sxpid,proposition_differentiability_local_sxpid,proposition_target_chain_rule_local_sxpid}, also hold for $\hat{I}^{\sx \pm}_\cap$ and the full $I^{\sx}_\cap$, respectively.
\end{proposition}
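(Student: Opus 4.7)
The proof would be a largely mechanical ``integrate pointwise'' argument, handling each of the five previously established properties in turn. The plan is to reduce every claim about the globals $\hat{I}^{\sx\pm}_\cap$ and $I^{\sx}_\cap$ to the corresponding pointwise fact about $\hat{i}^{\sx\pm}_\cap$ and $i^{\sx}_\cap$ via the linearity and monotonicity of the Lebesgue integral with respect to $\mathbb{P}^V$, together with standard change-of-variables and differentiation-under-the-integral tools.

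First I would dispatch the ``algebraic'' properties --- \textbf{self-redundancy}, \textbf{invariance under superset removal/addition}, \textbf{symmetry under permutations}, and the \textbf{target chain rule}. All four are pointwise identities in $(s,t)$ established in \cref{thrm_local_pid_axioms} and \cref{proposition_target_chain_rule_local_sxpid}. Integrating both sides against $d\mathbb{P}^V(s,t)$ and using linearity of the integral yields the global versions directly; for self-redundancy one recognises $\int i(t:\mathbf{a})\,d\mathbb{P}^V = I(T:(S_i)_{i\in\mathbf{a}})$, and for the target chain rule one notes that the summand $i^{\sx}_\cap(t_1:\alpha)$ does not depend on $t_2$, so that integrating it against the full $\mathbb{P}^V$ reduces to integrating against the pushforward excluding $T_2$, giving exactly $I^{\sx}_\cap(T_1:\alpha)$; the remaining term is by definition $I^{\sx}_\cap(T_2:\alpha\mid T_1)$.

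Next I would handle \textbf{monotonicity} of $\hat{I}^{\sx\pm}_\cap$. Since the pointwise bound in \cref{proposition_monotonicity}, namely $\hat{i}^{\sx\pm}_\cap(t:\alpha)\le \hat{i}^{\sx\pm}_\cap(t:\alpha;\mathbf{a}_{m+1})$, holds $\mathbb{P}^V$-almost everywhere, integrating preserves the inequality. For \textbf{invariance under invertible measurable maps}, I would combine the pointwise identity proved in \cref{proposition_invariance_local_sxpid} with the change-of-variables formula: if $\phi,\psi$ are the invertible maps and $\tilde{\mathbb{P}}^V=\mathbb{P}^V\circ(\phi\times\psi)^{-1}$, then
\begin{equation*}
\int_{\tilde{E}} i^{\sx}_\cap(\tilde{t}:\tilde{\alpha})\,d\tilde{\mathbb{P}}^V(\tilde{s},\tilde{t})
= \int_{E} i^{\sx}_\cap(\phi^{-1}\tilde{s}\,,\,\psi^{-1}\tilde{t}:\alpha)\,d\mathbb{P}^V(s,t)
= I^{\sx}_\cap(T:\alpha),
\end{equation*}
the first equality by pushforward and the second by the pointwise invariance.

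The main obstacle will be \textbf{differentiability}, since this is the only property that is not a pointwise (in)equality in $(s,t)$ but an exchange of a limit with the $(s,t)$-integral. Concretely, I want
\begin{equation*}
\partial^{\mathbb{Q}}_{\mathbb{P}} I^{\sx}_\cap(T:\alpha)[\mathbb{P}]
= \int_{E} \partial^{\mathbb{Q}}_{\mathbb{P}} i^{\sx}_\cap(t:\alpha)[\mathbb{P}]\,d\mathbb{P}^V(s,t),
\end{equation*}
with the inner expression supplied by \cref{proposition_differentiability_local_sxpid}. To justify the swap I would invoke dominated convergence on the difference quotients: for sufficiently small $\epsilon$, the quantity $\epsilon^{-1}\bigl(i^{\sx}_\cap(t:\alpha)[\mathbb{P}+\epsilon\mathbb{Q}]-i^{\sx}_\cap(t:\alpha)[\mathbb{P}]\bigr)$ is controlled by $|d\mathbb{Q}^T/d\mathbb{P}^T|$, $|d\mu^T_{\alpha,s,\mathbb{Q}}/d\nu^T_{\alpha,s}|$ and $|d\mathbb{Q}^{\mathds{1}_{\alpha,s}}/d\mathbb{P}^{\mathds{1}_{\alpha,s}}|$, each of which is $\mathbb{P}^V$-integrable under the standing hypothesis $\mathbb{Q}\ll\mathbb{P}$ (after a mild uniform boundedness assumption on the Radon--Nikod\'ym derivatives, as is customary in variational information-theoretic settings). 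The subtle point is that $\nu^T_{\alpha,s}$ itself depends on $\mathbb{P}$, so the dominating function must be chosen uniformly in the perturbation $\epsilon$; this is where a careful estimate using $\nu^T_{\alpha,s}\ll\mathbb{P}$ (argued in \cref{section_measure_theoretic_approach_subsection_definition_sxpid}) is required. Once the exchange is justified, the explicit derivative formula of \cref{proposition_differentiability_local_sxpid} integrates to an explicit expression for $\partial^{\mathbb{Q}}_{\mathbb{P}} I^{\sx}_\cap(T:\alpha)[\mathbb{P}]$ in terms of Radon--Nikod\'ym derivatives of $\mathbb{Q}^V$ against $\mathbb{P}^V$, which establishes differentiability of the global measure (and, by the same argument applied to each summand in \cref{eqn_isxpid_decomposition}, of $\hat{I}^{\sx\pm}_\cap$).
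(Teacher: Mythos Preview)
Your treatment of self-redundancy, symmetry, invariance under superset removal, monotonicity, invariance under invertible maps, and the target chain rule is essentially what the paper does (and in places is more careful than the paper, which simply asserts the invariance step without writing out the change-of-variables).

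The differentiability argument, however, contains a genuine error. You write
\[
\partial^{\mathbb{Q}}_{\mathbb{P}} I^{\sx}_\cap(T:\alpha)[\mathbb{P}]
= \int_{E} \partial^{\mathbb{Q}}_{\mathbb{P}} i^{\sx}_\cap(t:\alpha)[\mathbb{P}]\,d\mathbb{P}^V(s,t),
\]
but this overlooks that the \emph{integration measure} $\mathbb{P}^V$ also depends on $\mathbb{P}$. When you perturb $\mathbb{P}\to\mathbb{P}+\epsilon\mathbb{Q}$, you have
\[
I^{\sx}_\cap[\mathbb{P}+\epsilon\mathbb{Q}]
=\int_E i^{\sx}_\cap(t:\alpha)[\mathbb{P}+\epsilon\mathbb{Q}]\,d(\mathbb{P}^V+\epsilon\mathbb{Q}^V)(s,t),
\]
so differentiating at $\epsilon=0$ produces a product-rule contribution $\int_E i^{\sx}_\cap(t:\alpha)[\mathbb{P}]\,d\mathbb{Q}^V$ in addition to the term you display. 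The paper handles this by working in density coordinates, writing $I^{\sx}_\cap[p]=\int i^{\sx}_\cap[p]\,p\,d\lambda$ and computing the functional (Fr\'echet-type) derivative, which yields
\[
\frac{\Delta I^{\sx}_\cap(T:\alpha)}{\Delta p}(s,t)
= i^{\sx}_\cap(t:\alpha)[p] + p(s,t)\,\partial_p^q\, i^{\sx}_\cap(t:\alpha)[p],
\]
exhibiting both terms explicitly. Your dominated-convergence programme for swapping limit and integral is reasonable in spirit (the paper simply assumes boundedness at this step), but it needs to be applied to the correct two-term expression; as stated, your displayed identity is false and the subsequent ``explicit derivative formula'' would be off by the missing summand.
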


\begin{proof}

The global measure $I^{\sx}_\cap$ naturally fulfills the criteria from \cref{thrm_local_pid_axioms}, as the expected value does not affect $\alpha$. Moreover, due to the monotonicity of the local quantity $\hat{i}^{\sx \pm}_\cap$, also $\hat{I}^{\sx \pm}_\cap$ fulfills 
\begin{align*}
    \hat{I}^{\sx \pm}_\cap(T:\alpha) = \int_{E} \hat{i}^{\sx \pm}_\cap(t:\alpha) d\mathbb{P}^V(s, t) &\leq \int_{E} \hat{i}^{\sx \pm}_\cap(t:\alpha;\mathbf{a}_{m+1}) d\mathbb{P}^V(s, t) = \hat{I}^{\sx \pm}_\cap(T:\alpha;\mathbf{a}_{m+1}) \; ,
\end{align*}
and is hence monotone as in \cref{proposition_monotonicity}.

Regarding the invariance, since both $i^{\sx}_\cap(t:\alpha)$ and $\mathbb{P}$ are invariant under transformations as discussed in \cref{proposition_invariance_local_sxpid}, in consequence $I^{\sx}_\cap(T:\alpha)$ is invariant under any topological isomorphism, too.

To show the differentiability with respect to $p$, as above in \cref{proposition_differentiability_local_sxpid}, we now utilize functional calculus. This is similar to a Fr\'{e}chet derivative used in calculus of variations.
The functional derivative of $I^{\sx}_\cap(T:\alpha)[p]$ as a functional of a density function $p$ in the direction $q$, denoted by $\frac{\Delta I^{\sx}_\cap(T:\alpha)} {\Delta p}$ is defined via the integral
\begin{align*}
    \int \frac{\Delta I^{\sx}_\cap(T:\alpha)} {\Delta p(s, t)} q(s, t) d\lambda(s,t) := \left. \left[ \frac{d}{d\epsilon} I^{\sx}_\cap(T:\alpha)[p+\epsilon q] \right] \right\vert_{\epsilon = 0} \; .
\end{align*}
Carrying out the derivative with respect to $\epsilon$, under the assumption that the integral is bounded, we find

\begin{align*}
   \frac{\Delta I^{\sx}_\cap(T:\alpha)} {\Delta p}(s, t) = i^{\sx}_\cap(t:\alpha)[p(s, t)] + p(s, t) \, \partial_p^q \, i^{\sx}_\cap(t:\alpha)[p(s, t)] \; .
\end{align*}
Because $i^{\sx}_\cap(t:\alpha)[p]$ was a differentiable functional of $p$ as demonstrated in \cref{proposition_differentiability_local_sxpid}, consequently, also $I^{\sx}_\cap$ is a differentiable functional with respect to changes in the underlying probability density function.

The linearity of the integral further assures the validity of the target chain rule as in \cref{proposition_target_chain_rule_local_sxpid}, i.e. 
\begin{align*}
    I^{\sx}_\cap(\{T_1, T_2\}:\alpha) &= \int_{E} i^{\sx}_\cap(\{t_1, t_2\}:\alpha) \; d\mathbb{P}^V(s, t_1, t_2) \\
    &= \int_{E} \left[ i^{\sx}_\cap\left( t_2:\alpha \mid t_1 \right) + i^{\sx}_\cap(t_1 : \alpha) \right] \; d\mathbb{P}^V(s, t_1, t_2) \\
    &= \int_{E} i^{\sx}_\cap\left( t_2:\alpha \mid t_1 \right) \; d\mathbb{P}^V(s, t_1, t_2) + \int_{E_{-T_2}} i^{\sx}_\cap(t_1 : \alpha) \; d\mathbb{P}^V_{-T_2}(s, t_1) \\
    &= I^{\sx}_\cap\left( T_2:\alpha \mid T_1 \right) + I^{\sx}_\cap(T_1 : \alpha) \; .
\end{align*}
Here $\mathbb{P}^V_{-T_2}$ denotes the marginal measure on $E_{-T_2}$.

\end{proof}

\section{Conclusion}

There have been manifold successful endeavors developing PID in the past years, both axiomatically and in terms of deriving quantities, defining a PID in discrete settings. This work has introduced a measure of redundant information, utilizing the more general framework of measure theory, which does not only recover the discrete case for the Dirac measure, but also enables usage in continuous, or discrete-continuously mixed sample spaces.

Moreover, this newly proposed quantity fulfills a number of properties desirable for application. Next to $i^{\sx}_\cap \, / \, I^{\sx}_\cap$ fulfilling a system of (local) axioms for being a sensible PID quantity, invariance under invertible topological isomorphisms makes it suitable for investigating real-world data sets, while differentiability along probability distributions enable studying processes on statistical manifolds, such as learning in neural networks, or statistical modelling. 

While the framework proposed here builds on mathematical rigor, there are still open questions, of practical nature, as well as conceptual ones, which we will investigate in future studies. 

The first issue is a fundamental one; in the discrete case the local $i^{\sx}_\cap$ admits a non-negative decomposition via the self-redundancy as in \cref{thrm_local_pid_axioms}, by resembling the local entropy and the local conditional entropy. The general measure-theoretic framework, however, does not immediately allow for such a non-negative decomposition, as for instance local differential entropy might take negative values. For the continuous case, this undermines the original operational interpretation of $i^{\sx \pm}_\cap$ from the discrete case as given by the authors in \cite{makkeh2021isx}.

However, this might be fixed by adding a term to each of the $i^{\sx \pm}_\cap$, such that they become positive; this term was denoted by $F$ in the footnote in \cref{subsubsection_self_redundancy_monotonicity_symmetry} and might be of the form $F(x) = \log\left[ \frac{d\mathbb{M}^V}{d\lambda}(x) \right]$ for an $\mathbb{M}$ an arbitrary $\sigma$-finite positive measure such that $\mathbb{P} \ll \mathbb{M}$. This idea is similar to what Jaynes has proposed in \cite{jaynes_probability_theory}. This ansatz for a solution would, however, introduce yet another parameter of an investigation, as choosing $\mathbb{M}$ is a vastly underdetermined problem.

Secondly, for our measure to applicable to finite real-world data sets, there needs to be a tractable estimator, derived from first principles. As far as our literature research has reached, there is no such estimator to this day. 
Our current ideas revolve around obtaining a limit via generation of (topological) nets from nearest neighbor searches. The almost sure continuity of the Radon-Nikod\'{y}m derivative as in \cref{eqn_continuous_PID_collections} will guarantee the convergence of an approximation of the associated finitely valued likelihood fractions.

\subsection*{Acknowledgments}
We  would  like  to  thank  Raul Vicente, Juhan Aru and David Ehrlich for discussions and their valuable comments on this paper. MW received support from SFB Project No. 1193, Sub-project No. C04 funded by the Deutsche Forschungsgemeinschaft. MW, AM, KSP and AG are employed at the Campus Institute for Dynamics of Biological Networks(CIDBN) funded by the Volkswagen Stiftung. MW and AM received support from the Volkswagen Stiftung under the  program  “Big  Data  in  den  Lebenswissenschaften”. KSP received financial support from Honda Research Institute Europe. This work was supported by a funding from the Ministry for Science and Education of Lower Saxony and the Volkswagen Foundation through the “Niedersächsisches Vorab.” 

\newpage

\bibliographystyle{ieeetr}

\end{document}